\newcommand{\xvec}{\boldsymbol}
\newcommand{\xmat}{\mathbf}
\newcommand{\xset}{\mathds}
\newtheorem{theorem}{Theorem}
\newtheorem{proposition}{Proposition}
\newtheorem{assumption}{Assumption}
\newtheorem{lemma}{Lemma}
\definecolor{col1}{HTML}{EBEBDE} 
\definecolor{col2}{HTML}{777764} 
\definecolor{col3}{HTML}{4F4747} 
\begin{document}

\def\spacingset#1{\renewcommand{\baselinestretch}%
{#1}\small\normalsize} \spacingset{1}


\title{A Multivariate Spatial and Spatiotemporal ARCH Model}
\author{Philipp Otto\footnote{philipp.otto@ikg.uni-hannover.de}\\
{\small Leibniz University Hannover, Germany}}

\maketitle

\begin{abstract}
This paper introduces a multivariate spatiotemporal autoregressive conditional heteroscedasticity (ARCH) model based on a vec-representation. The model includes instantaneous spatial autoregressive spill-over effects in the conditional variance, as they are usually present in spatial econometric applications. Furthermore, spatial and temporal cross-variable effects are explicitly modelled. We transform the model to a multivariate spatiotemporal autoregressive model using a log-squared transformation and derive a consistent quasi-maximum-likelihood estimator (QMLE). For finite samples and different error distributions, the performance of the QMLE is analysed in a series of Monte-Carlo simulations. In addition, we illustrate the practical usage of the new model with a real-world example. We analyse the monthly real-estate price returns for three different property types in Berlin from 2002 to 2014. We find weak (instantaneous) spatial interactions, while the temporal autoregressive structure in the market risks is of higher importance. Interactions between the different property types only occur in the temporally lagged variables. Thus, we see mainly temporal volatility clusters and weak spatial volatility spill-overs. 
\end{abstract}

\noindent%
{\it Keywords:}  Conditional heteroscedasticity, multivariate spatiotemporal data, QML estimator, real-estate prices, volatility clustering

\spacingset{1.45} 


\section{Introduction}\label{sec:introduction}

In general, spatiotemporal processes can be represented as multivariate time series. However, when analysing spatial and spatiotemporal data, one has to account for one key difference compared to multivariate time series. Due to their spatial nature, geographical proximity between the observations induces instantaneous interactions between them. This is commonly known as Tobler's first law of geography: ``everything is related to everything, but near things are more related than distant things'' (\citealt{Tobler70}). This does not only apply to the mean behaviour of the data, but also their variance. Thus, spatiotemporal models should always allow for instantaneous spatial interactions.

In this paper, we introduce a multivariate spatial and spatiotemporal autoregressive conditional heteroscedasticity (spatial ARCH, briefly spARCH) model. Using a vector representation, we extend the spatial ARCH models of \cite{Otto18_spARCH, sato2021spatial,OttoSchmid19_arxiv_unified} to multivariate and spatiotemporal data. In that sense, the approach follows the same logic as classical time-series vec-ARCH models (cf. \citealt{engle1995multivariate}). Hence, we call the new multivariate, spatiotemporal ARCH process vec-spARCH. All these approaches trace back to the seminal papers of \cite{Engle82} and \cite{Bollerslev86}. In contrast to previous multivariate spatiotemporal GARCH models (e.g., \citealt{Borovkova12}), we allow for instantaneous dependence over space at the same time point, which is important for spatial models. Thus, our multivariate vec-spARCH model can also be applied for purely spatial data, when there are spatial volatility clusters (i.e., clustered regions of high/low volatilities).  Moreover, it is worth noting that, using this vec-representation, the above-mentioned spatial GARCH proposed by \cite{OttoSchmid19_arxiv_unified} nests multivariate models, such that their results can also directly be applied. Alternative models that allow instantaneous spatial interactions in the variance are spatial stochastic volatility models, as proposed by \cite{tacspinar2021bayesian}.

The vec-spARCH process distinguishes between three different effects: (1a) instantaneous spatial effects of the same variables, (1b) instantaneous cross-variable spatial effects, (2a) temporal autoregressive effects of the same variables, (2b) cross-variable temporal autoregressive effects, and (3) variable-specific unconditional volatility levels. Each of the effects is described by a parameter matrix or vector, for which we derive a quasi maximum-likelihood (QML) estimator. For estimation, a logarithmic transformation of the vec-spARCH is applied (cf. \citealt{Robinson:2009}), such that the model can be represented as multivariate spatiotemporal autoregressive model of the transformed quantity. The asymptotic consistency of the QML estimator has been shown by \cite{yang2017identification} for a multivariate spatial autoregressive model (i.e., without temporal dimension) and by \cite{Yu08} for a univariate spatiotemporal autoregressive process. Under certain regularity conditions, which are commonly used in spatial econometrics, we show the identifiability and consistency of the estimators.  

In practice, spatiotemporal ARCH models are particularly important, because an ARCH error process can also account for variation due to latent factors. In particular, for small spatial units, it is often difficult to quantify influential factors with the same spatial resolution. For instance, the average income of households in small spatial units, e.g. postal-code areas, does not necessarily reflects the economic power of these particular units, because people's daily cycles usually span across multiple small spatial units. That is, people do not necessarily live where they work or spend most of their time. In such cases, spatial and spatiotemporal ARCH models are important error distributions of any model to account for unobservable factors.

The remainder of the paper is structured as follows. Firstly, we introduce the multivariate modelling framework and discuss how the model can transformed to a regular univariate spatiotemporal process. Further, we derive the Gaussian logarithmic likelihood and show the asymptotic consistency for the QML estimator under several regularity assumption that are often met in spatial econometrics. Secondly, we analyse the finite-sample performance of the proposed estimator for several model specifications and two different error distributions, namely standard normal and heavy-tailed errors ($t_3$-distributed). Thirdly, a real-world application is presented, for which we show that it is important to account for instantaneous spatial interactions and cross-variable correlations. To be precise, Berlin real-estate prices of three different property types are analysed and we find weak spatial interactions, even though they are dominated by the temporal effects. These interdependencies are more pronounced when the spatial units and time intervals are small. Finally, Section \ref{sec:conclusion} concludes the paper with a summary and brief outlook to future research and potential fields of application.

\section{Multivariate Spatiotemporal ARCH Model}\label{sec:models}

In spatial statistics/econometrics, autoregressive spill-over effects are instantaneous. That is, no time lag is required for shocks to affect neighbouring locations.  Instead, we assume that the conditional variance can vary over space depending on the realised variance at adjacent locations. This results in spatial clusters of high and low variances. For previous univariate or multivariate spatiotemporal GARCH models, such as proposed by \cite{Borovkova12,holleland2020stationary}, spatial spill-overs could only occur after one time instance. In other words, the conditional variance at each locations depends on the past squared observations at the same location and its neighbours, but not on their neighbouring locations at the same time point. This is the fundamental difference between multivariate time series models covering spatiotemporal data and approaches from spatial statistics or econometrics.

\subsection{Model specification}\label{sec:spGARCH}

Assume that $\left\{\xvec{Y}_t(\xvec{s}) \in \xset{R}^p: \xvec{s} \in D_{\xvec{s}} \subset \xset{R}^q, t \in \xset{Z} \right\}$  is a $p$-variate spatiotemporal stochastic process in a $q$-dimensional space $D_{\xvec{s}}$ with positive volume (cf. \citealt{Cressie11}). More precisely, the process is observed at $n$ different locations $\xvec{s}_1, \ldots, \xvec{s}_n$, i.e., at each location $\xvec{s}_i$ and time point $t$ we observe a vector $\xvec{Y}_t(\xvec{s}_i) = (Y_{1,t}(\xvec{s}_i), \ldots, Y_{p,t}(\xvec{s}_i))'$. Moreover, let $\xvec{Y}_{j,t} = (Y_{j,t}(\xvec{s}_1), \ldots, Y_{j,t}(\xvec{s}_n))'$ the vector of the $j$-th characteristic at all locations and $\xmat{Y}_t = (\xvec{Y}_{1,t}, \ldots, \xvec{Y}_{p,t})$ be an $n \times p$ matrix of all observations $\left(Y_{j,t}\left(\xvec{s}_i\right)\right)_{i = 1, \ldots, n, j = 1, \ldots, p}$ at time point $t$. Suppose that the process is observed for $t = 1, \ldots, T$. It is worth mentioning that a multivariate spatial log-ARCH model is present if $T = 1$ and a classical time-series log-ARCH models are also nested if $D_{\xvec{s}}$ is a singleton (i.e., $n = 1$).

Univariate spatial and spatiotemporal ARCH models have been introduced by \cite{Otto18_spARCH} and \cite{Sato17}.  Moreover, \cite{OttoSchmid19_arxiv_unified} generalised the model in a unified framework nesting spatial GARCH, E-GARCH, and Log-GARCH models. In this paper, we follow the idea of the symmetric spatial log-GARCH model of \cite{sato2021spatial}, which includes elements of GARCH and E-GARCH models, but does not coincide with one or the other even if $D_{\xvec{s}}$ consist of only a single location (i.e., the classical time series case). More precisely, the link function between the spatial volatility term is logarithmic like for E-GARCH models, while the volatility term depends on some transformation of the squared observed process (similar to GARCH models).  In contrast to time-series models, in which the temporal lag is clearly defined by the past observations and future observations are not allowed to influence the current observation, there are complex interdependencies in spatial settings and there is no causal relation between the observations anymore. For instance, with two locations $\xvec{s}_1$ and $\xvec{s}_2$ (i.e., $n = 2$), location $\xvec{s}_1$ would influence $\xvec{s}_2$ at each time point and vice versa. In a time series context, this would correspond to a simultaneous influence from future and past values. Thus, for direct generalisation of GARCH or E-GARCH models like in \cite{Otto18_spARCH,OttoSchmid19_arxiv_unified}, difficult assumptions for the existence or invertibility of the process are required in the general case. In addition, existing software could directly be used with some adaptations for the spatiotemporal case (see \citealt{Otto19_RJournal}).

The multivariate spatiotemporal ARCH model (vec-spARCH) is given by
\begin{equation} \label{eq:initial2}
	\left\{ 
	\begin{array}{ccl}
 	\xmat{Y}_t & = & \xmat{H}_t^{1/2} \xmat{\Xi}_t  \qquad \text{with} \\
    \ln\xmat{H}_t & = & \xmat{A} + \xmat{W}  \, \ln\xmat{Y}_t^{(2)} \,  \xmat{\Psi}  +  \ln\xmat{Y}_{t-1}^{(2)} \,  \xmat{\Pi} \, ,
    \end{array} \right.
\end{equation}
where $\ln\xmat{Y}_t^{(2)}$ denotes the matrix of squared observations $\left(\ln Y_{j,t}^2(\xvec{s}_i)\right)_{i = 1, \ldots, n, j = 1, \ldots, p}$, and $\ln\xmat{H}_t$ is the matrix of all $\ln h_{j,t}(\xvec{s}_i)$ with $i = 1, ..., n$ rows and $j = 1, ..., p$ columns. This matrix is the spatial equivalent of the conditional volatility (see \citealt{Otto19_statpapers}). Moreover, the matrix of disturbances is denoted by $\xmat{\Xi}_t = (\xvec{\varepsilon}_{1,t}, \ldots, \xvec{\varepsilon}_{p,t})$ with independent and identically distributed random vectors $\xvec{\varepsilon}_{j,t}$ with $E(\xvec{\varepsilon}_{j,t}) = \xvec{0}$ and $Cov(\xvec{\varepsilon}_{j,t}) = \xmat{I}$. The weight $n \times n$ matrix $\xmat{W}$ defines the spatial dependence structure, i.e., which locations are considered to be adjacent. Moreover, the cross-variable spatial effects are represented by the off-diagonal elements of $\xmat{\Psi}$, and the temporally lagged cross-variable effects are given by the off-diagonal elements of $\xmat{\Pi}$. Both matrices have dimension $p \times p$. In addition, the own-variable spatial and temporal autoregressive ARCH effects are summarised by the diagonal entries of $\xmat{\Psi}$ and $\xmat{\Pi}$, respectively.

Analogue to multivariate vec-ARCH time-series model of \cite{engle1995multivariate}, we can rewrite \eqref{eq:initial2} to get the vectorised form
\begin{equation}\label{eq:vec_representation}
\ln vec(\xmat{H}_t) = vec(\xmat{A}) + (\xmat{\Psi}' \otimes \xmat{W}) \ln vec(\xmat{Y}_t^{(2)}) +  (\xmat{\Pi}' \otimes \xmat{I}) \ln vec(\xmat{Y}_{t-1}^{(2)}) \, .
\end{equation}
The Kronecker product is denoted by $\otimes$. Interestingly, using such vec-representation, one can see that the multivariate ARCH model is a special case of (univariate) $np$-dimensional spatial GARCH models with a weight matrix $\xmat{\Psi}' \otimes \xmat{W}$. Thus, also spatial GARCH and E-GARCH models can be constructed in the same manner and all results of \cite{OttoSchmid19_arxiv_unified} can directly be applied. However, this will not be the focus of this paper.

Moreover, the multivariate spatiotemporal ARCH model can be written as multivariate spatiotemporal autoregressive process by applying a log-squared transformation, 
\begin{equation*}\label{eq:vec_representation}
\ln \xmat{Y}_t^{(2)} = \ln \xmat{H}_t + \ln \xmat{\Xi}_t^{(2)} \, .
\end{equation*}
Then, we get that
\begin{equation*}
\ln \xmat{Y}_t^{(2)} = \xmat{A} + \xmat{W} \ln \xmat{Y}_t^{(2)} \xmat{\Psi} +   \ln\xmat{Y}_{t-1}^{(2)} \,  \xmat{\Pi} + \ln \xmat{\Xi}_t^{(2)} \, .
\end{equation*}
With $\xmat{U}_t = \ln \xmat{\Xi}_t^{(2)} - E\left(\ln \xmat{\Xi}_t^{(2)}\right)$ and $\tilde{\xmat{A}} = \xmat{A} + E\left(\ln \xmat{\Xi}_t^{(2)}\right)$, the model can be rewritten as
\begin{equation*}
\ln \xmat{Y}_t^{(2)} = \tilde{\xmat{A}} + \xmat{W} \ln \xmat{Y}_t^{(2)} \xmat{\Psi} +  \ln\xmat{Y}_{t-1}^{(2)} \,  \xmat{\Pi} + \xmat{U}_t \, .
\end{equation*}
Hence, the vec-spARCH model coincides with a multivariate spatiotemporal autoregressive process of the log-squared transformed process $\ln \xmat{Y}_t^{(2)}$. For the multivariate but purely spatial case, \cite{yang2017identification} has derived conditions for identification and the consistency and asymptotic normality of a QML estimator. Furthermore, \cite{Yu08} derive asymptotic results for a QML estimator of spatiotemporal but univariate process when both $n$ and $T$ are large. We combine these two results to propose a QML estimator for the spatiotemporal, multivariate ARCH model.

Assuming a standard normal error matrix $\xmat{\Xi}_t$, $E\left(\ln \xmat{\Xi}_t^{(2)}\right)$ is the expectation of a log-Gamma distribution, i.e., $E\left(\ln \xmat{\Xi}_t^{(2)}\right) = \gamma - \log(2) \approx -1.27$. Then, $\xmat{A}$ can be determined from $\tilde{\xmat{A}}$, which facilitates the interpretation. With $\xmat{S}_{np} = \xmat{I} - \xmat{\Psi}'\otimes\xmat{W}$, we can derive the sample log-likelihood for the spatiotemporal case with $T$ time points, i.e.,
\begin{eqnarray*}\label{eq:LL}
\ln \mathcal{L}( \xmat{A}, \xmat{\Psi}, \xmat{\Pi} | \xmat{Y}_0) & = & - \frac{Tnp}{2} \ln (2\pi) + \frac{n\ln\sigma^2_u}{2p} + \frac{T}{np} \ln |\xmat{S}_{np}| \\ 
                              &   & - \frac{1}{2np\sigma^2_u} \sum_{t = 1}^{T} \left[ \xmat{S}_{np} \ln vec(\xmat{Y}_t^{(2)}) - vec(\tilde{\xmat{A}}) - (\xmat{I} \otimes \ln \xmat{Y}^{(2)}_{t-1}) vec(\xmat{\Pi}) \right]' \\
                              &   & \qquad\qquad\qquad \times \left[ \xmat{S}_{np} \ln vec(\xmat{Y}_t^{(2)}) - vec(\tilde{\xmat{A}}) - (\xmat{I} \otimes \ln \xmat{Y}^{(2)}_{t-1}) vec(\xmat{\Pi}) \right]\, ,
\end{eqnarray*}
where $\sigma^2_u$ is the variance of the transformed errors $\xmat{U}_t$, which is known quantity in our case (otherwise ${\xmat{A}}$ would not be identifiable). Furthermore, for standard normal $\xmat{\Xi}_t$, we get $\sigma^2_u = \psi(1/2) \approx 4.93$, where $\psi$ denotes the trigamma function. It is worth mentioning that we derived the log-likelihood for multivariate Gaussian errors $\xmat{U}_t$, which are in fact skewed because of the logarithmic transformation. In the following Section \ref{sec:assumptions}, however, we suppose much weaker conditions for the moments of $\xmat{U}_t$, which are fulfilled in the case of standard normal $\xmat{\Xi}_t$, for instance. Furthermore, let $\ddot{\xvec{Y}}_t = \ln vec(\xmat{Y}_t^{(2)})$ for an easier notation. With $E(\ddot{\xvec{Y}}_t) = \xmat{S}_{np0}^{-1} \left(vec(\tilde{\xvec{A}}_0) + \xmat{\Pi}_0' \otimes \ddot{\xvec{Y}}_{t-1} \right)$, we get the expected log-likelihood as
\begin{eqnarray*}\label{eq:ELL}
E(\ln \mathcal{L}( \xmat{A}, \xmat{\Psi}, \xmat{\Pi} | \xmat{Y}_0)) & = & - \frac{Tnp}{2} \ln (2\pi) + \frac{n\ln\sigma^2_u}{2p} + \frac{T}{np} \ln |\xmat{S}_{np}| \\ 
                              &   & - \quad \frac{1}{2np\sigma^2_u} \sum_{t = 1}^{T} \left[ \xmat{S}_{np} \xmat{S}_{np0}^{-1} (vec(\tilde{\xmat{A}}_0 - \tilde{\xmat{A}}) + ((\xmat{\Pi}_0' - \xmat{\Pi}') \otimes \xmat{I}) \ddot{\xvec{Y}}_{t-1}) \right]' \\
                              &   & \qquad\qquad\qquad \times \left[ \xmat{S}_{np} \xmat{S}_{np0}^{-1} ( vec(\tilde{\xmat{A}}_0 - \tilde{\xmat{A}}) + ((\xmat{\Pi}_0' - \xmat{\Pi}') \otimes \xmat{I}) \ddot{\xvec{Y}}_{t-1})  \right] \\
                             &   & - \quad \frac{T}{2np\sigma^2_u} tr\left( \xmat{S}_{np} \xmat{S}_{np0}^{-1} \xmat{S}_{np0}^{'-1} \xmat{S}_{np}'\right) \, .\\
\end{eqnarray*}

\subsection{Assumptions and parameter space}\label{sec:assumptions}

Below, we discuss important model assumptions that are also needed to derive the asymptotic consistency of the QML estimators.  

\begin{assumption}\label{ass:1}
	Suppose that each element of $\xmat{\Xi}_t$ is not equal to zero with probability one for all $t = 1, \ldots, T$.
\end{assumption}

To be able to apply the log-squared transformation of the observed process, we must assume that the response is not equal to zero with probability one. This is the case for any continuous error process $\xmat{\Xi}_t$. In practice, due to missing values, there is sometimes an excess of zeros. In this case, often a small number is added to the zero values, such that the logarithmic transformation gets feasible (see, e.g., \citealt{Francq11}). If there are zero values with a probability larger than zero, \cite{sucarrat2018estimation} proposed an expectation-maximisation algorithm for estimation in the time-series case. This would be an interesting extension for future research. Further, we need some basic assumptions on the transformed error process $\xmat{U}_t$ to apply the results of \cite{yang2017identification} and \cite{Yu08}.

\begin{assumption}\label{ass:2}
	Assume that each row $j = 1, \ldots, p$ of $\xmat{U}_t$, say $U_{t,j}$, is a random vector with zero mean and covariance $\sigma_u^2 \xmat{I}$ that is i.i.d. across time. Additionally suppose that $E(|u_{t,ik}u_{t,il}u_{t,ip}u_{t,iq}|^{1/\delta}) < \infty$ for all $i = 1, \ldots, n$, $t = 1, \ldots, T$ and $k,l,p,q = 1, \ldots, p$ and some $\delta > 0$. 
\end{assumption}


Moreover, the parameter space needs to be compact, as formulated in the following assumption, to prove the uniform convergence of the log-likelihood function.

\begin{assumption}\label{ass:3}
	The parameter spaces for $\tilde{\xmat{A}}$, $\xmat{\Psi}$ and $\xmat{\Pi}$ are compact sets and all parameters in their interior generate a stable process. Moreover, the data-generating parameters $\tilde{\xmat{A}}_0$, $\xmat{\Psi}_0$ and $\xmat{\Pi}_0$ are in the interior of corresponding parameter space.
\end{assumption}

The key question of this assumption is the stability of the process. One could rewrite the model as 
\begin{eqnarray*}\label{eq:stable_series}
\ddot{\xvec{Y}}_t & = & \xmat{S}_{np}^{-1} vec(\tilde{\xmat{A}}) + \xmat{S}_{np}^{-1}(\xmat{\Pi}' \otimes \xmat{I})\ddot{\xvec{Y}}_{t-1} + vec(\xmat{U}_t) \\
				  & = & \xmat{S}_{np}^{-1} vec(\tilde{\xmat{A}}) + \xmat{S}_{np}^{-1}(\xmat{\Pi}' \otimes \xmat{I})
				  \left[ \xmat{S}_{np}^{-1} vec(\tilde{\xmat{A}}) + \xmat{S}_{np}^{-1}(\xmat{\Pi}' \otimes \xmat{I})\ddot{\xvec{Y}}_{t-2} + vec(\xmat{U}_{t-1})\right] 
				  + vec(\xmat{U}_t) \\	
				  & \vdots &  \\	
				  & = & (\xmat{I} + \xmat{S}_{np}^{-1}(\xmat{\Pi}' \otimes \xmat{I}) + \ldots + (\xmat{S}_{np}^{-1}(\xmat{\Pi}' \otimes \xmat{I}))^j) vec(\tilde{\xmat{A}}) + (\xmat{S}_{np}^{-1}(\xmat{\Pi}' \otimes \xmat{I}))^j \ddot{\xvec{Y}}_{t-j} \\
				   &   & \quad + \sum_{i = 0}^{j-1} (\xmat{S}_{np}^{-1}(\xmat{\Pi}' \otimes \xmat{I}))^i vec(\xmat{U}_{t-i}) \\	
\end{eqnarray*}
Hence, the stability of the process does not only depend on the temporal parameter matrix $\xmat{\Pi}$ but also on the weight matrix $\xmat{W}$ (via $\xmat{S}_{np}^{-1}$). If the above series converges, we get a stable and stationary process. 

\begin{proposition}\label{prop:stability}
	If all eigenvalues of $\xmat{S}_{np}^{-1}(\xmat{\Pi}' \otimes \xmat{I})$ are smaller than one, the multivariate spatiotemporal ARCH process is stable across time.
\end{proposition}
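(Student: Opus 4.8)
The plan is to read the recursion displayed just before the statement as a first-order vector autoregression and to apply the classical stability theory for such processes, with the Kronecker/$vec$ structure entering only through bookkeeping. Writing $\xmat{B} := \xmat{S}_{np}^{-1}(\xmat{\Pi}' \otimes \xmat{I})$ (which is well defined because $\xmat{S}_{np}$ is invertible on the admissible parameter set) and $\xvec{c} := \xmat{S}_{np}^{-1} vec(\tilde{\xmat{A}})$, that recursion is
\[
\ddot{\xvec{Y}}_t = \xvec{c} + \xmat{B}\, \ddot{\xvec{Y}}_{t-1} + vec(\xmat{U}_t),
\]
so the hypothesis is exactly the spectral-radius condition $\rho(\xmat{B}) < 1$ (reading ``smaller than one'' in modulus). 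The first step is to convert this into geometric control of the matrix powers: by Gelfand's formula $\|\xmat{B}^j\|^{1/j} \to \rho(\xmat{B}) < 1$, so there are $r \in (\rho(\xmat{B}),1)$ and $C < \infty$ with $\|\xmat{B}^j\| \le C r^j$ for all $j$; the same condition makes $\xmat{I}-\xmat{B}$ invertible with Neumann expansion $(\xmat{I}-\xmat{B})^{-1} = \sum_{i\ge 0}\xmat{B}^i$.

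I would avoid simply passing to the limit $j\to\infty$ inside the back-substitution displayed above, since the vanishing of the term $\xmat{B}^j\ddot{\xvec{Y}}_{t-j}$ presupposes that $\ddot{\xvec{Y}}_{t-j}$ already has bounded second moment --- precisely what is to be shown. Instead I would construct the candidate solution directly via its moving-average representation,
\[
\ddot{\xvec{Y}}_t := (\xmat{I}-\xmat{B})^{-1} vec(\tilde{\xmat{A}}) + \sum_{i=0}^{\infty}\xmat{B}^i\, vec(\xmat{U}_{t-i}),
\]
and verify three things. First, the random series converges in mean square: since the $\xmat{U}_t$ are zero-mean and i.i.d.\ across time with finite covariance (Assumption~\ref{ass:2}), the cross terms vanish and, for $M \le N$,
\[
E\Big\|\sum_{i=M}^{N}\xmat{B}^i\, vec(\xmat{U}_{t-i})\Big\|^2 = \sum_{i=M}^{N} tr\!\big(\xmat{B}^i\, Cov(vec(\xmat{U}_{t-i}))\, (\xmat{B}^i)'\big) \le C' \sum_{i=M}^{N} r^{2i},
\]
which is a Cauchy tail by the geometric bound, so the limit exists in $L^2$ and is a finite-second-moment random vector. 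Second, inserting this representation into the right-hand side of the recursion and shifting the summation index reproduces the left-hand side, so $\ddot{\xvec{Y}}_t$ indeed solves the dynamics. Third, because $\ddot{\xvec{Y}}_t$ is a fixed, time-invariant measurable function of the i.i.d.\ innovations $(vec(\xmat{U}_s))_{s\le t}$, it is strictly stationary and ergodic, with constant mean $(\xmat{I}-\xmat{B})^{-1} vec(\tilde{\xmat{A}})$ and absolutely summable autocovariances coming from the same geometric decay; this is the sense in which the process is ``stable across time''.

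To finish I would record uniqueness, which also recovers the authors' heuristic that ``the above series converges'': if $\ddot{\xvec{Y}}_t^{\ast}$ is any stationary finite-variance solution, iterating the recursion $j$ steps gives $\ddot{\xvec{Y}}_t^{\ast} - \ddot{\xvec{Y}}_t = \xmat{B}^j(\ddot{\xvec{Y}}_{t-j}^{\ast} - \ddot{\xvec{Y}}_{t-j})$, whose right-hand side tends to $\xvec{0}$ in $L^2$ because $\|\xmat{B}^j\| \to 0$ while the bracket has bounded second moment by stationarity. I expect the only genuine technical point to be this mean-square convergence together with the $vec$/Kronecker bookkeeping (in particular confirming that $Cov(vec(\xmat{U}_t))$ is finite, which Assumption~\ref{ass:2} supplies); the remainder is the standard VAR(1) and Neumann-series computation, with the spectral-radius hypothesis doing all the work.
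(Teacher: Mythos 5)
Your proposal is correct and follows essentially the same route as the paper's own proof: both rest on the spectral-radius condition forcing the powers of $\xmat{B} = \xmat{S}_{np}^{-1}(\xmat{\Pi}' \otimes \xmat{I})$ to decay, the Neumann series for $(\xmat{I}-\xmat{B})^{-1}$, and the resulting moving-average representation of $\ddot{\xvec{Y}}_t$. The only difference is one of rigor: the paper's proof is a short sketch that asserts convergence of the matrix-power series and stops, whereas you additionally pin down the $L^2$ sense in which the random series converges, verify that the candidate actually solves the recursion, establish stationarity and uniqueness of the solution, and correctly flag the circularity in naively letting $j \to \infty$ in the back-substitution --- a step the paper implicitly takes without justification.
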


Note that each stable spatiotemporal ARCH process is also weakly stationary. Furthermore, the boundary region of $\xmat{\Psi}$ where $|\xmat{S}_{np}| = 0$ can be problematic in practice. However, as long as the true parameter $\xmat{\Psi}_0$ is bounded away from this region, the maximisation algorithm will not get to these boundaries with a large probability (see also \citealt{yang2017identification}).  


\begin{assumption}\label{ass:4}
	The row and column sums of $\xmat{W}$ in absolute values are uniformly bounded in $n$. Moreover, $\xmat{S}_{np}$ is invertible for all possible matrices $\xmat{\Psi}$ in the parameter space and $\xmat{S}_{np}^{-1}$ is uniformly bounded in absolute row and column sums. 
\end{assumption}

Assumption \ref{ass:4} is classical in spatial statistics to obtain a stable process across space (cf. \citealt{yang2017identification,Kelejian98,Lee04}). Here, we could adopt the assumption as formulated in \cite{yang2017identification} for multivariate spatial autoregressive models. Together with Proposition \ref{prop:stability}, we obtain a stable process across space and time. In practice, the spatial weight matrices often standardised to meet these regularity conditions, e.g. the most widely adopted row-wise standardisation.


\begin{assumption}\label{ass:5}
	Let $n$ be a nondecreasing function of $T$ and $T \to \infty$.
\end{assumption}

Assumption \ref{ass:5} implies that $n,T \to \infty$ simultaneously.

\subsection{Consistency of the QML estimator}\label{sec:qml} 

Due to the presence of endogenous variables, i.e., the instantaneous spatial interactions, the identification of spatial models is generally more difficult than in the strict time-series case, where all spatiotemporal interaction may only occur after one time lag. Thus, we initially focus on the identification of the parameters which is needed for the asymptotic consistency of the QML estimator in the following Theorem \ref{th:consistency}. Since the identification is inherent with the spatial dimension of the model, we could follow the same strategy as in \cite{yang2017identification} for multivariate spatial autoregressive models. The identification is based on the information inequality, as proposed by \cite{Rothenberg71}.

\begin{proposition}\label{prop:identification}
	If the Assumptions \ref{ass:1}-\ref{ass:5} are fulfilled, then $\tilde{\xmat{A}}_0$, $\xmat{\Psi}_0$ and $\xmat{\Pi}_0$ are uniquely identifiable.
\end{proposition}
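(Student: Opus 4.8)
The plan is to establish identification through the information inequality of \cite{Rothenberg71}: the (correctly normalised) expected Gaussian log-likelihood $E(\ln\mathcal{L})$ is maximised at the data-generating value, and the parameters are identified precisely when this maximiser is unique, i.e.\ when the map from $\theta=(\tilde{\xmat{A}},\xmat{\Psi},\xmat{\Pi})$ to the implied conditional law of $\ddot{\xvec{Y}}_t$ given $\ddot{\xvec{Y}}_{t-1}$ is injective. Since the QML criterion is built from a Gaussian working density, that conditional law is completely determined by its first two conditional moments,
\begin{equation*}
\mu_t(\theta)=\xmat{S}_{np}^{-1}\!\left(vec(\tilde{\xmat{A}})+(\xmat{\Pi}'\otimes\xmat{I})\ddot{\xvec{Y}}_{t-1}\right),\qquad \Sigma(\theta)=\sigma_u^2\,(\xmat{S}_{np}'\xmat{S}_{np})^{-1}.
\end{equation*}
Because the transformed model specifies these two moments correctly, the cross-entropy to which $E(\ln\mathcal{L})$ reduces is minimised exactly when $\mu_t(\theta)=\mu_t(\theta_0)$ almost surely in $\ddot{\xvec{Y}}_{t-1}$ and $\Sigma(\theta)=\Sigma(\theta_0)$. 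It therefore suffices to show that these two moment equalities force $\tilde{\xmat{A}}=\tilde{\xmat{A}}_0$, $\xmat{\Psi}=\xmat{\Psi}_0$ and $\xmat{\Pi}=\xmat{\Pi}_0$.

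First I would treat the covariance equality, which isolates the spatial parameter. From $\Sigma(\theta)=\Sigma(\theta_0)$ we obtain $\xmat{S}_{np}'\xmat{S}_{np}=\xmat{S}_{np0}'\xmat{S}_{np0}$, and inserting $\xmat{S}_{np}=\xmat{I}-\xmat{\Psi}'\otimes\xmat{W}$ gives, with $\Delta=\xmat{\Psi}-\xmat{\Psi}_0$,
\begin{equation*}
-\,\Delta'\otimes\xmat{W}\;-\;\Delta\otimes\xmat{W}'\;+\;(\xmat{\Psi}\xmat{\Psi}'-\xmat{\Psi}_0\xmat{\Psi}_0')\otimes(\xmat{W}'\xmat{W})=\xmat{0}.
\end{equation*}
Under Assumption~\ref{ass:4} together with the non-degeneracy of $\xmat{W}$ used by \cite{yang2017identification} — namely that $\xmat{W}$, $\xmat{W}'$ and $\xmat{W}'\xmat{W}$ cannot be tied together by a non-trivial Kronecker combination with $p\times p$ coefficient blocks — the linear blocks force $\Delta=\xmat{0}$, i.e.\ $\xmat{\Psi}=\xmat{\Psi}_0$. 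This is the multivariate analogue of the classical spatial-autoregressive identification condition of \cite{Lee04}.

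Given $\xmat{\Psi}=\xmat{\Psi}_0$, and hence $\xmat{S}_{np}=\xmat{S}_{np0}$, the mean equality simplifies after cancelling $\xmat{S}_{np0}^{-1}$ to $vec(\tilde{\xmat{A}}-\tilde{\xmat{A}}_0)+((\xmat{\Pi}-\xmat{\Pi}_0)'\otimes\xmat{I})\ddot{\xvec{Y}}_{t-1}=\xvec{0}$ for almost every $\ddot{\xvec{Y}}_{t-1}$. Stability (Proposition~\ref{prop:stability}) together with Assumptions~\ref{ass:2}--\ref{ass:3} guarantees that $\ddot{\xvec{Y}}_{t-1}$ has a non-degenerate, positive-definite second-moment matrix, so the constant regressor and the lagged regressors are not collinear; the displayed affine identity in $\ddot{\xvec{Y}}_{t-1}$ can then hold almost surely only if $\tilde{\xmat{A}}=\tilde{\xmat{A}}_0$ and $\xmat{\Pi}=\xmat{\Pi}_0$. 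Collecting the three equalities yields unique identifiability.

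I expect the decisive obstacle to be the spatial step. The covariance equation only delivers $\xmat{S}_{np}'\xmat{S}_{np}=\xmat{S}_{np0}'\xmat{S}_{np0}$, i.e.\ that $\xmat{G}=\xmat{S}_{np}\xmat{S}_{np0}^{-1}$ is orthogonal, and orthogonality alone does not pin down $\xmat{G}=\xmat{I}$; nor does the mean condition rule out a spurious orthogonal $\xmat{G}\neq\xmat{I}$. Ruling such solutions out is exactly what the non-degeneracy conditions on $\xmat{W}$ accomplish, and particular care is needed in the symmetric-$\xmat{W}$ case (where $\xmat{W}=\xmat{W}'$ collapses the two linear blocks, so one must additionally exploit the $\xmat{W}'\xmat{W}$ term and the boundary exclusion $|\xmat{S}_{np}|\neq 0$ of Assumptions~\ref{ass:3}--\ref{ass:4}). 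As a cross-check I would also concentrate $\tilde{\xmat{A}}$ and $\xmat{\Pi}$ out of $E(\ln\mathcal{L})$ — they enter only the quadratic term, which vanishes at their true values for every $\xmat{\Psi}$ — and examine the resulting profile in $\xmat{\Psi}$, built from $\ln|\xmat{S}_{np}|$ and $tr(\xmat{G}\xmat{G}')$; via the scalar inequality $s^2-1-2\ln s\ge 0$ for the singular values $s$ of $\xmat{G}$ this again only yields orthogonality, confirming that in both routes the final reduction to $\xmat{G}=\xmat{I}$ rests on the structural conditions imposed on $\xmat{W}$.
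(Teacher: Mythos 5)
Your proposal rests on the same foundation as the paper's proof: Rothenberg's information inequality for the Gaussian quasi-likelihood, under which identification reduces to showing that matching the conditional mean and the conditional covariance of $\ddot{\xvec{Y}}_t$ given $\ddot{\xvec{Y}}_{t-1}$ forces the parameters to their true values. The organisation, however, is reversed, and one ingredient is genuinely different. The paper first disposes of $(\tilde{\xmat{A}},\xmat{\Pi})$ through the quadratic term in $\xmat{V}_t=\xmat{S}_{np}\xmat{S}_{np0}^{-1}\bigl(vec(\tilde{\xmat{A}}_0-\tilde{\xmat{A}})+((\xmat{\Pi}_0'-\xmat{\Pi}')\otimes\xmat{I})\ddot{\xvec{Y}}_{t-1}\bigr)$, which vanishes if and only if $\tilde{\xmat{A}}=\tilde{\xmat{A}}_0$ and $\xmat{\Pi}=\xmat{\Pi}_0$ \emph{irrespective of} $\xmat{\Psi}$ (because $\xmat{S}_{np}\xmat{S}_{np0}^{-1}$ is invertible), and then treats $\xmat{\Psi}$ through the log-determinant-versus-trace terms via the arithmetic--geometric mean inequality for the eigenvalues of $\xmat{G}\xmat{G}'$, $\xmat{G}=\xmat{S}_{np}\xmat{S}_{np0}^{-1}$. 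You instead settle $\xmat{\Psi}$ first from the covariance equality, attacked algebraically through the Kronecker-block identity $-\Delta'\otimes\xmat{W}-\Delta\otimes\xmat{W}'+(\xmat{\Psi}\xmat{\Psi}'-\xmat{\Psi}_0\xmat{\Psi}_0')\otimes(\xmat{W}'\xmat{W})=\xmat{0}$, and only then use the mean equality; your collinearity argument there is correct and is consistent with the paper's remark that for $T=1$ one must take $\tilde{\xmat{A}}_0$ constant over space. So the two proofs are the same inequality decomposed in a different order, with your algebraic treatment of the covariance equation replacing the paper's AM--GM step; your ordering buys a cleaner mean step (since $\xmat{S}_{np}=\xmat{S}_{np0}$ can be cancelled), while the paper's ordering shows that $(\tilde{\xmat{A}},\xmat{\Pi})$ are identified without any knowledge of $\xmat{\Psi}$.

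The obstacle you flag at the end deserves emphasis, because it is not a defect peculiar to your route --- it is exactly where the paper's own argument is thinnest. Equality in the paper's AM--GM step, combined with the logarithmic term, only forces all eigenvalues of $\xmat{G}\xmat{G}'$ to equal one, i.e.\ orthogonality of $\xmat{G}$; this is the same conclusion your covariance equality delivers. The paper then states that $\xmat{S}_{np}(\xmat{\Psi})\xmat{S}_{np0}^{-1}=\xmat{I}$ if and only if $\xmat{\Psi}=\xmat{\Psi}_0$, which proves only the easy direction (that the true parameter attains equality) and never excludes an orthogonal $\xmat{G}\neq\xmat{I}$ of the admissible form $\xmat{I}-((\xmat{\Psi}'-\xmat{\Psi}_0')\otimes\xmat{W})\xmat{S}_{np0}^{-1}$. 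Ruling such solutions out requires a linear-independence (non-degeneracy) condition tying $\xmat{W}$, $\xmat{W}'$ and $\xmat{W}'\xmat{W}$, of the kind imposed by \cite{yang2017identification}; this is not literally contained in Assumptions \ref{ass:1}--\ref{ass:5}, so strictly speaking both your proof and the paper's need it as an additional hypothesis. Your write-up has the merit of naming the missing condition explicitly rather than using it tacitly.
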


For the identification, we make use of the fact that the spatial dependence is constant across time and the temporal dependence is constant for all spatial locations. If either of them varies in space or time, further identifying information would be needed. Moreover, in contrast to \cite{yang2017identification}, the errors are uncorrelated by definition and the error variance is supposed to be known. The assumption of an uncorrelated error process is essential for GARCH models for identification of the parameters in the conditional volatility equation, i.e., the so-called GARCH effects. Moreover, the assumption of a known error variance $\sigma^2_u$ is of course restrictive (see also \citealt{Francq11,Brockwell06}) and it is often difficult to choose an appropriate value. For time series, ex-post scale adjustments have been proposed to circumvent this assumption (see \citealt{bauwens2010general,sucarrat2016estimation}). In this paper, however, we follow the classical approach and point to future research for these ex-post scale adjustments. Moreover, for the purely spatial case with $T = 1$, $\tilde{\xmat{A}}_0$ must be constant across space to be identifiable.

To estimate the parameters, we propose a QML estimator based on the log-likelihood function given by \eqref{eq:LL}. That is, the parameters $\xvec{\vartheta} = (vec(\tilde{\xmat{A}})', vec(\xmat{\Psi})', vec(\xmat{\Pi})')$ can be estimated by
\begin{eqnarray*}\label{eq:LL}
\hat{\xvec{\vartheta}}_{nT} = \arg\max_{\xvec{\vartheta} \in \Theta} \ln \mathcal{L}( \xmat{A}, \xmat{\Psi}, \xmat{\Pi} | \xmat{Y}_0), 
\end{eqnarray*}
where $\Theta$ is the parameter space that fulfils Assumption \ref{ass:3}. It is worth noting that we need to condition on the observed vector at $t = 0$, $\xvec{Y}_0$, because of the temporal autoregressive structure. The asymptotic consistency of this QML estimator is summarised in the following theorem.
 
\begin{theorem}\label{th:consistency}
	Under Assumptions \ref{ass:1}-\ref{ass:5}, $\xvec{\vartheta}_0 = (vec(\tilde{\xmat{A}}_0)', vec(\xmat{\Psi}_0)', vec(\xmat{\Pi}_0)')$ can be uniquely identified and $\hat{\xvec{\vartheta}}_{nT} \overset{p}{\to} \xvec{\vartheta}_0$.
\end{theorem}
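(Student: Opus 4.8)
The plan is to establish consistency by verifying the two standard ingredients of the extremum-estimator framework: identification (already supplied by Proposition~\ref{prop:identification}) and uniform convergence of the scaled log-likelihood to a nonstochastic limit whose unique maximiser is $\xvec{\vartheta}_0$. Concretely, I would define the normalised objective $Q_{nT}(\xvec{\vartheta}) = (nT)^{-1}\ln\mathcal{L}(\xmat{A},\xmat{\Psi},\xmat{\Pi}\mid\xmat{Y}_0)$ and its population counterpart $\bar{Q}_{nT}(\xvec{\vartheta}) = (nT)^{-1}E(\ln\mathcal{L})$, the expected log-likelihood displayed just before Assumption~\ref{sec:assumptions}. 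The argument then rests on the classical result (see \citealt{yang2017identification,Yu08}) that if (i) $\Theta$ is compact, (ii) $\bar{Q}_{nT}$ is uniquely maximised at $\xvec{\vartheta}_0$ uniformly in $n,T$, and (iii) $\sup_{\xvec{\vartheta}\in\Theta}|Q_{nT}(\xvec{\vartheta}) - \bar{Q}_{nT}(\xvec{\vartheta})| \overset{p}{\to} 0$, then $\hat{\xvec{\vartheta}}_{nT}\overset{p}{\to}\xvec{\vartheta}_0$.

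I would dispatch the three conditions in turn. Compactness (i) is exactly Assumption~\ref{ass:3}, so nothing is needed there. For the identification/separation condition (ii), Proposition~\ref{prop:identification} already gives uniqueness of the maximiser; what remains is to confirm that $\bar{Q}_{nT}(\xvec{\vartheta}) - \bar{Q}_{nT}(\xvec{\vartheta}_0)$ is bounded away from zero outside any neighbourhood of $\xvec{\vartheta}_0$, which follows by combining the information inequality underlying Proposition~\ref{prop:identification} with the uniform boundedness of $\xmat{S}_{np}^{-1}$ and of $\xmat{W}$ in absolute row and column sums from Assumption~\ref{ass:4}. These bounds guarantee that the quadratic form and the $tr(\xmat{S}_{np}\xmat{S}_{np0}^{-1}\xmat{S}_{np0}^{'-1}\xmat{S}_{np}')$ term in the expected log-likelihood stay uniformly controlled, so the separation is uniform in $n$. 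The stability Proposition~\ref{prop:stability}, ensuring all eigenvalues of $\xmat{S}_{np}^{-1}(\xmat{\Pi}'\otimes\xmat{I})$ are below one, supplies the stationarity needed so that $E(\ddot{\xvec{Y}}_{t-1})$ and the relevant second moments are well defined and finite.

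The real work is the uniform stochastic convergence (iii), and I expect this to be the main obstacle. The difference $Q_{nT} - \bar{Q}_{nT}$ consists of quadratic forms in the transformed errors $\xmat{U}_t$ whose coefficient matrices depend on $\xvec{\vartheta}$. The plan is to show pointwise convergence via a law of large numbers and then upgrade to uniform convergence through a stochastic-equicontinuity (or Lipschitz/dominance) argument over the compact $\Theta$. Pointwise, the key probabilistic input is Assumption~\ref{ass:2}: the i.i.d.-across-time structure of the rows of $\xmat{U}_t$ together with the fourth-moment condition $E(|u_{t,ik}u_{t,il}u_{t,ip}u_{t,iq}|^{1/\delta})<\infty$ controls the variance of the sample quadratic forms, so that a weak law applies as $n,T\to\infty$ jointly (Assumption~\ref{ass:5}). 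The spatial boundedness conditions of Assumption~\ref{ass:4} are again essential here: they bound the entries and the row/column sums of the coefficient matrices uniformly in $n$, which is what prevents the variances of the quadratic forms from blowing up with the dimension. For equicontinuity I would differentiate the summands with respect to $\xvec{\vartheta}$, verify that the derivatives (which again involve $\xmat{S}_{np}^{-1}$ and its derivatives in $\xmat{\Psi}$) are dominated uniformly on $\Theta$ using Assumption~\ref{ass:4} and the interior-point condition of Assumption~\ref{ass:3}, and conclude via a standard uniform law of large numbers for stationary mixing sequences. The delicate point throughout is keeping every bound uniform in the growing dimension $n$ — this is precisely where the spatial structure departs from the classical time-series case and where the combination of \citet{yang2017identification} (spatial uniformity) and \citet{Yu08} (temporal averaging) must be fused.
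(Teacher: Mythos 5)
Your proposal follows the same skeleton as the paper's proof -- identification via Proposition~\ref{prop:identification} plus uniform convergence of the normalised log-likelihood over the compact $\Theta$ -- but it handles the crucial uniformity step by a genuinely different device. The paper expands the quadratic form $\tilde{\xvec{U}}_t(\xi)'\tilde{\xvec{U}}_t(\xi)$ so that the parameters factor out completely: every stochastic object appearing in the expansion (terms such as $\frac{1}{npT}\sum_{t}\ddot{\xvec{Y}}_t'\ddot{\xvec{Y}}_t$, $\frac{1}{npT}\sum_{t}\ddot{\xvec{Y}}_{t-1}'\xvec{U}_t$, $\frac{1}{npT}\sum_{t}\xvec{U}_t'\xvec{U}_t$, with $\xmat{W}$-weighted variants) is parameter-free, and the parameter differences $\xmat{\Psi}-\xmat{\Psi}_0$, $\xmat{\Pi}-\xmat{\Pi}_0$, $\tilde{\xmat{A}}-\tilde{\xmat{A}}_0$ enter only as bounded multiplicative coefficients. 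Applying the convergence results of Lemma~\ref{lemma:conv} (rates $O_p(1/\sqrt{npT})$, adapted from \citealt{yang2017identification} and \citealt{Yu08}) to this finite list of averages then yields \emph{uniform} convergence on $\Theta$ essentially for free, with no generic uniform law of large numbers or stochastic-equicontinuity argument for the sample criterion; equicontinuity is only needed for the deterministic limit $Q$, which the paper obtains by Lipschitz arguments (the bounded derivative $\frac{1}{np}tr(\xmat{S}_{np}^{'-1}(\xmat{\iota}_{ij}\otimes\xmat{W}))$ of $\frac{1}{np}\ln|\xmat{S}_{np}|$, and Lipschitzness of the trace term, both resting on Lemma~\ref{lemma:UB_Psi}). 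You instead propose the generic extremum-estimator route: pointwise LLN plus stochastic equicontinuity of the sample objective obtained by differentiating in $\xvec{\vartheta}$ and dominating the derivatives, concluded by a ULLN for stationary mixing sequences. That route is workable under the same assumptions and is more general (it would survive criteria that are not linear-quadratic in the parameters), but it is heavier, and the step you yourself flag as delicate -- making the ULLN valid as the cross-sectional dimension $n$ grows -- is precisely what the paper's decomposition sidesteps, since uniformity in $\xvec{\vartheta}$ is reduced to boundedness of coefficients rather than to an equicontinuity property of a growing-dimensional empirical process. Your separation condition (ii) is likewise handled in the paper not as an explicit well-separatedness bound but through unique identification combined with uniform equicontinuity of the limit criterion on the compact parameter space, which delivers the same conclusion.
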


\section{Monte Carlo Simulations}

In the following section, we present the results of a series of simulations on the consistency of the parameters for finite samples. To give a first visual impression, we display a bivariate spatial ARCH process ($T = 1$, $n = 900$, Rook's continuity matrix) with and without spatial cross-correlation in Figure \ref{fig:simulations}. For both examples, the spatial ARCH effects are equal to 0.5, a moderate level of spatial dependence. Therefore, spatial volatilities clusters can be seen in both cases. They are indicated by a higher variance, that is, more intensely coloured pixels, whereas clusters of low variance are close to zero indicated by evenly grey coloured pixels. Now, for the case with a cross-correlation of 0.35 (top panels), these clusters are aligned across the variables, while they are located at different positions in the lower panels with zero cross-correlation. 

\begin{figure}
  \centering
  \includegraphics[width=0.6\textwidth]{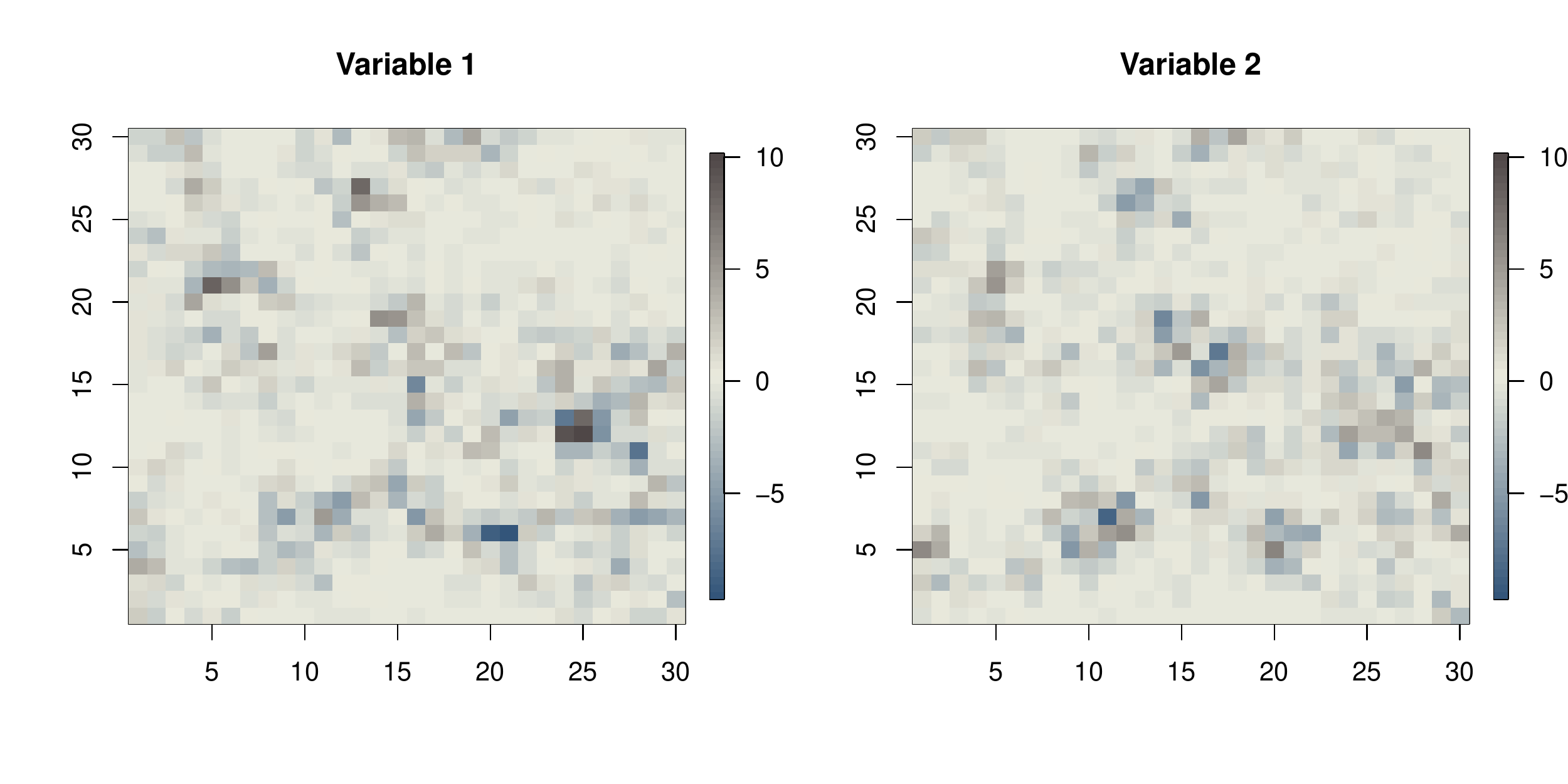}\\
  \includegraphics[width=0.6\textwidth]{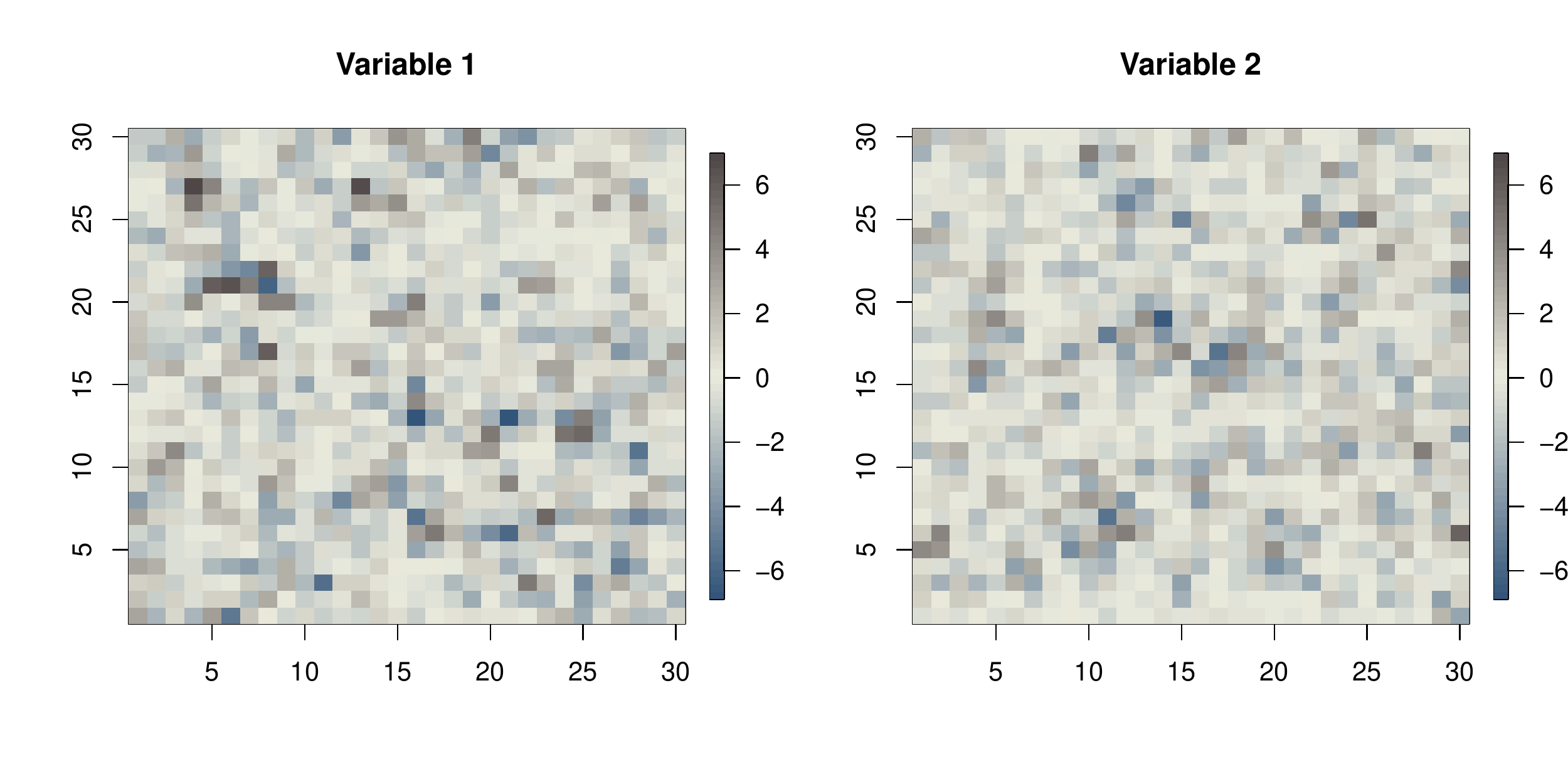}
  \caption{Simulated random fields (first row: high cross correlation $\psi_{12} = \psi_{21} = 0.35$; second row: no cross correlation $\psi_{12} = \psi_{21} = 0$). The spatial ARCH coefficients are identical for all components and both settings, i.e., $\psi_{11} = \psi_{22} = 0.5$.}\label{fig:simulations}
\end{figure}

In our Monte Carlo simulation study, we simulated three different bivariate models (A, B, C) with two different error distributions (standard normal and $t_3$) with 1000 replications. For each combination, we successively increased the size of the spatial field $n \in \{25, 49, 100\}$ and the length of the time series $T \in \{30, 100, 200\}$. We simulated the process on a two-dimensional grid as visualised in Figure \ref{fig:simulations} and the spatial weight matrix was chosen as row-standardised Queen's contiguity matrix. The data-generating parameters of the three considered models are as follows:
\begin{enumerate}
\item[(A)] Spatiotemporal model with a weak spatial cross-correlation: $\xmat{A}_0 = \xvec{1}_n\xvec{1}_p'$, $\xmat{\Psi}_0 = \left( \begin{matrix} 0.5 & 0.1 \\ 0.1 & 0.5 \end{matrix} \right)$, and $\xmat{\Pi}_0 = \left( \begin{matrix} 0.3 & 0 \\ 0 & 0.3 \end{matrix} \right)$
\item[(B)] Spatiotemporal model without temporal dependence, but the same spatial dependence like for Model A: $\xmat{A}_0 = \xvec{1}_n\xvec{1}_p'$, $\xmat{\Psi}_0 = \left( \begin{matrix} 0.5 & 0.1 \\ 0.1 & 0.5 \end{matrix} \right)$, and $\xmat{\Pi}_0 = \left( \begin{matrix} 0 & 0 \\ 0 & 0 \end{matrix} \right)$
\item[(C)] Spatiotemporal model with pronounced cross-correlation and weak spatial correlation, same temporal autocorrelation like for Model A: $\xmat{A}_0 = \xvec{1}_n\xvec{1}_p'$, $\xmat{\Psi}_0 = \left( \begin{matrix} 0.2 & 0.4 \\ 0.4 & 0.2 \end{matrix} \right)$, and $\xmat{\Pi}_0 = \left( \begin{matrix} 0.3 & 0 \\ 0 & 0.3 \end{matrix} \right)$.
\end{enumerate}

For the first simulated model, i.e., Model A with standard normal errors, the parameter estimates are depicted as a series of boxplots for the three increasing sizes $(n,T)' \in \{(25,30)', (49,100)', (100,200)'\}$ in Figure \ref{fig:MC_results}. In all cases, the asymptotic consistency of the QML estimator can be seen, because the boxplots are getting more centred around zero. Moreover, we see the typical bias of the QML estimators for small spatial fields, which rapidly vanishes with an increasing sample size. The same behaviour can be observed for all other settings and error distributions. The average bias and the root-mean-square errors (RMSE) are reported in Tables \ref{table:MC_bias} and \ref{table:MC_rmse}, respectively. Both the absolute values of the bias and the RMSE are approaching zero if $n$ and $T$ are increasing.

\begin{figure}
  \centering
  \includegraphics[width=0.6\textwidth]{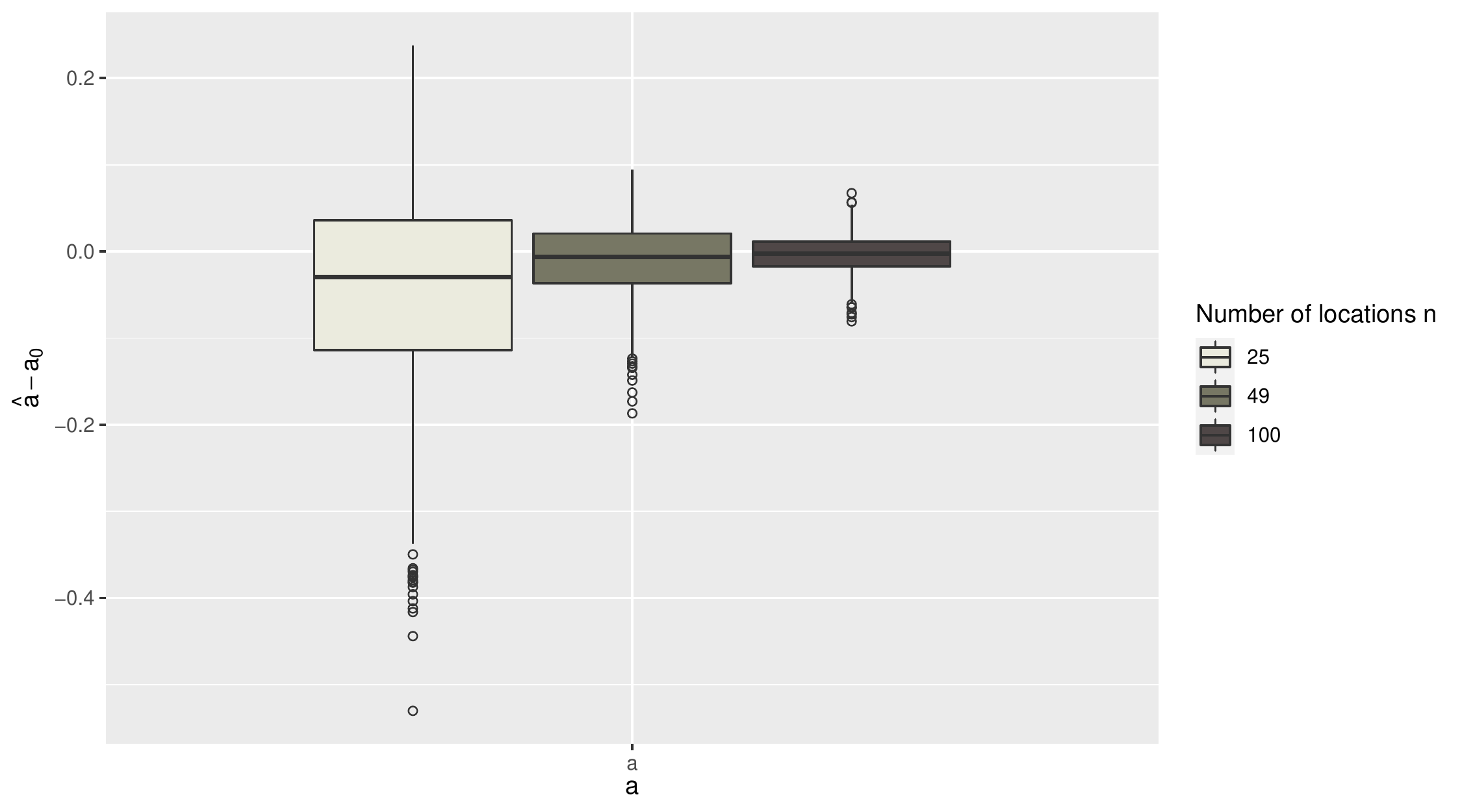}\\
  \includegraphics[width=0.6\textwidth]{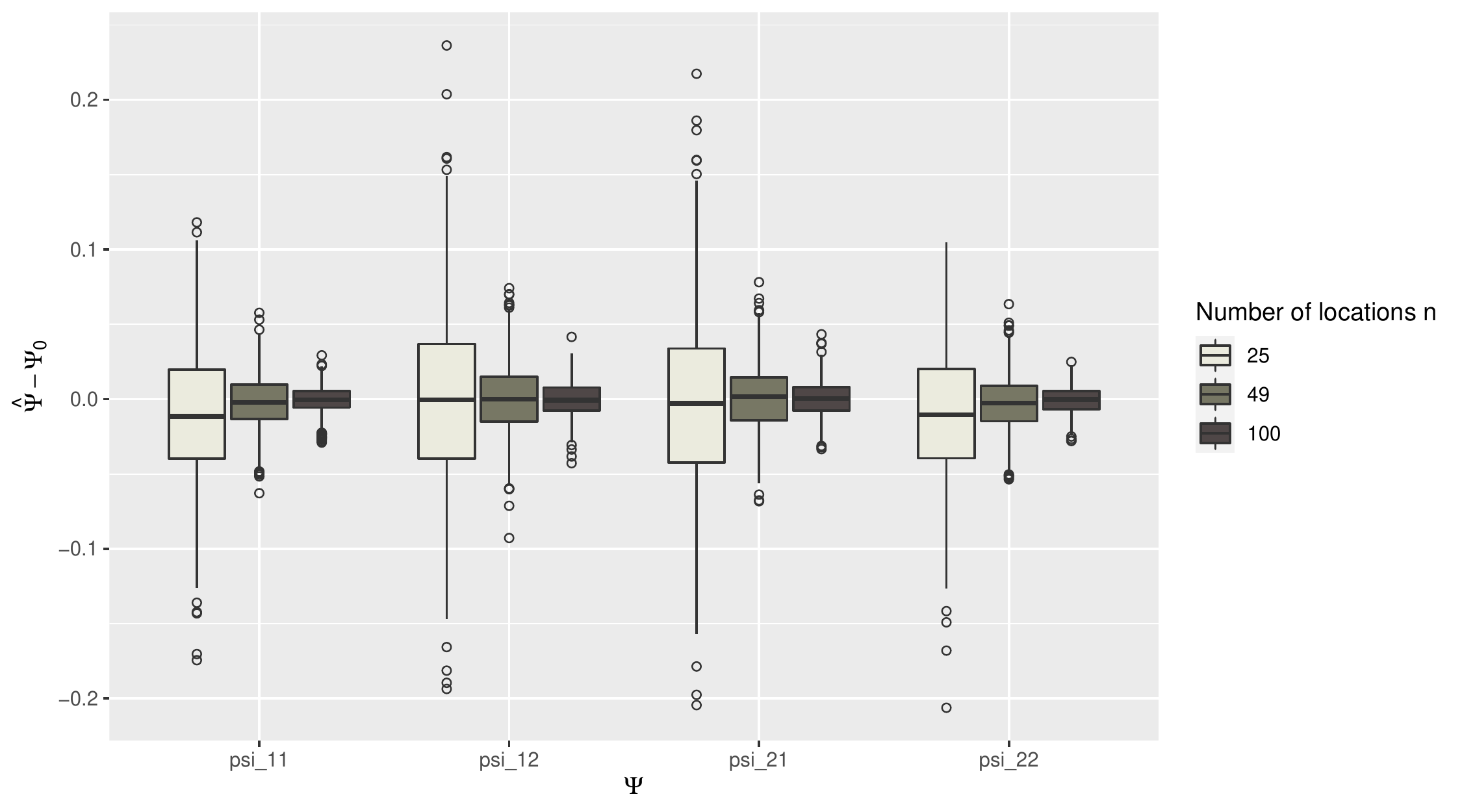}\\
  \includegraphics[width=0.6\textwidth]{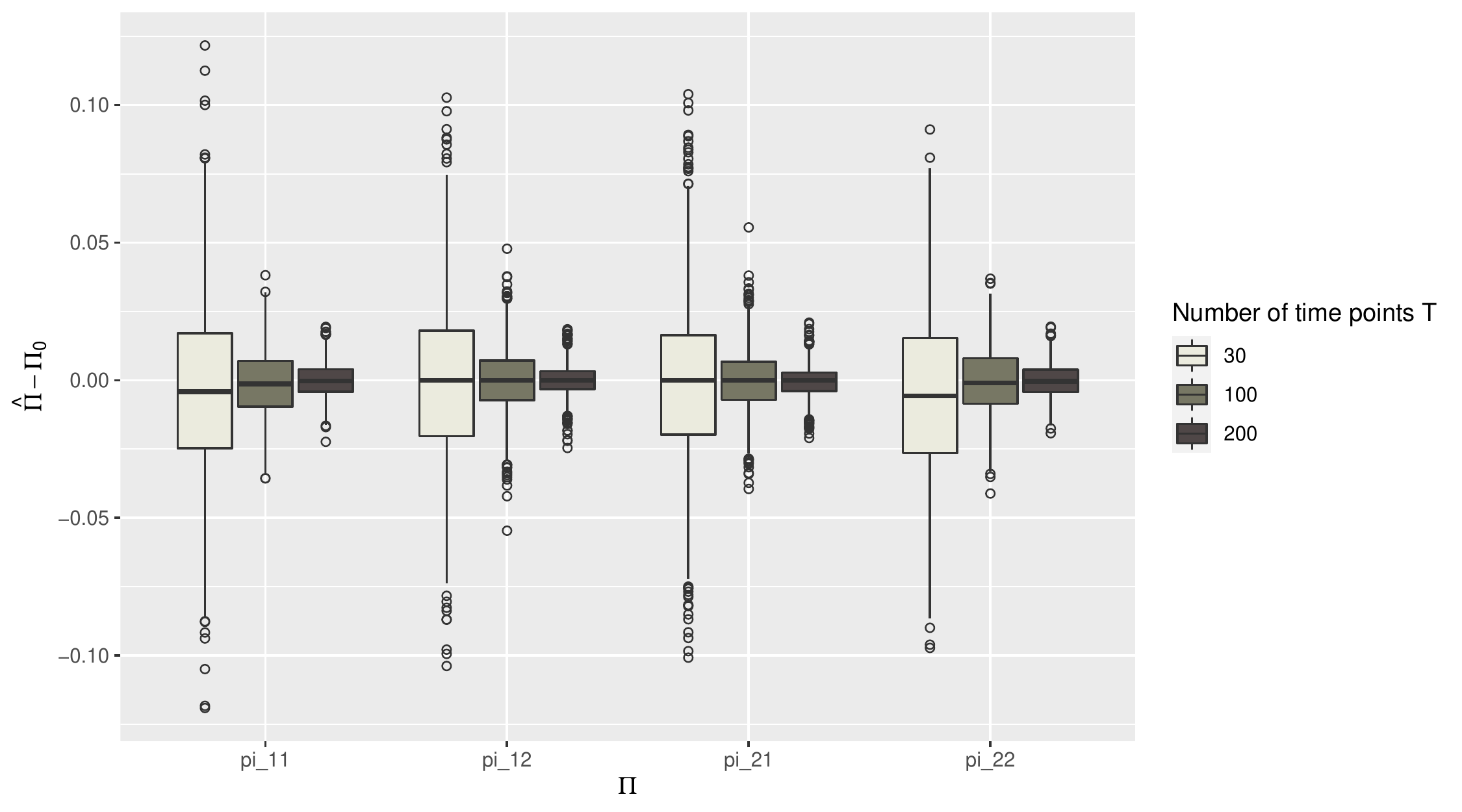}  
  \caption{Estimation performance of the QML estimator for Model A with standard normal errors. Top row: unconditional variance level $a$, centre: spatial coefficient matrix $\xmat{\Psi}$, bottom: temporal coefficient matrix $\xmat{\Pi}$. For each plot, we show the difference between the parameter estimate and the true data-generating parameter.}\label{fig:MC_results}
\end{figure}

\begin{table}
\caption{Average bias of the QML estimates for all considered settings.}\label{table:MC_bias}
\begin{scriptsize}
   \begin{tabular}{l l l ccccccccc}
   \hline
      \multicolumn{3}{c}{Average bias} & $a$ & $\psi_{11}$ & $\psi_{21}$ & $\psi_{12}$ & $\psi_{22}$ & $\pi_{11}$ & $\pi_{21}$ & $\pi_{12}$ & $\pi_{22}$ \\
   \hline
   \multicolumn{3}{c}{Model A, \emph{data-generating parameters}} & 1 & 0.5 & 0.1 & 0.1 & 0.5 & 0.3 & 0 & 0 & 0.3 \\
   &                     & $n = 25$,  $T = 30$   & -0.0464 & -0.0115 & -0.0021 & -0.0003 & -0.0112 & -0.0036 & -0.0010 & -0.0008 & -0.0055 \\
   &       Normal errors & $n = 49$,  $T = 100$  & -0.0093 & -0.0024 &  0.0005 &  0.0002 & -0.0028 & -0.0013 & -0.0002 & -0.0002 & -0.0004 \\
   &                     & $n = 100$, $T = 200$  & -0.0027 & -0.0005 &  0.0001 & -0.0001 & -0.0006 & -0.0001 & -0.0003 &  0.0000 & -0.0003 \\
   &                     & $n = 25$,  $T = 30$   &  0.0269 & -0.0106 & -0.0026 & -0.0040 & -0.0103 & -0.0042 &  0.0006 & -0.0035 & -0.0050 \\
   & $t_3$-distr. errors & $n = 49$,  $T = 100$  &  0.0029 & -0.0005 & -0.0007 & -0.0005 & -0.0010 & -0.0016 & -0.0009 & -0.0002 & -0.0010 \\
   &                     & $n = 100$, $T = 200$  &  0.0009 & -0.0005 &  0.0004 & -0.0006 & -0.0003 & -0.0001 &  0.0001 &  0.0001 &  0.0000 \\ 
   \multicolumn{3}{c}{Model B, \emph{data-generating parameters}} & 1 & 0.5 & 0.1 & 0.1 & 0.5 & 0 & 0 & 0 & 0 \\
   &                     & $n = 25$,  $T = 30$   &  0.0249 &  0.0143 &  0.0194 &  0.0189 &  0.0149 & -0.0014 & -0.0009 & -0.0019 & -0.0004 \\
   &       Normal errors & $n = 49$,  $T = 100$  & -0.0021 & -0.0008 & -0.0005 & -0.0003 & -0.0004 & -0.0001 & -0.0003 &  0.0007 & -0.0004 \\
   &                     & $n = 100$, $T = 200$  & -0.0009 &  0.0001 & -0.0005 &  0.0004 & -0.0005 &  0.0001 &  0.0001 &  0.0000 & -0.0002 \\
   &                     & $n = 25$,  $T = 30$   &  0.0088 &  0.0013 &  0.0006 & -0.0019 & -0.0011 & -0.0026 & -0.0017 & -0.0003 &  0.0006 \\
   & $t_3$-distr. errors & $n = 49$,  $T = 100$  &  0.0017 &  0.0010 &  0.0011 & -0.0008 & -0.0005 & -0.0002 &  0.0000 & -0.0005 &  0.0001 \\
   &                     & $n = 100$, $T = 200$  &  0.0001 &  0.0001 & -0.0011 &  0.0017 & -0.0003 &  0.0000 & -0.0002 &  0.0002 &  0.0002 \\
   \multicolumn{3}{c}{Model C, \emph{data-generating parameters}} & 1 & 0.2 & 0.4 & 0.4 & 0.2 & 0.3 & 0 & 0 & 0.3 \\
   &                     & $n = 25$,  $T = 30$   & -0.0448 & -0.0075 & -0.0057 &  0.0018 & -0.0120 & -0.0037 & -0.0003 & -0.0014 & -0.0031 \\
   &       Normal errors & $n = 49$,  $T = 100$  & -0.0095 & -0.0025 & -0.0003 &  0.0002 & -0.0022 & -0.0002 & -0.0009 & -0.0003 & -0.0008 \\
   &                     & $n = 100$, $T = 200$  & -0.0012 & -0.0004 &  0.0001 &  0.0003 & -0.0001 & -0.0002 & -0.0001 & -0.0002 & -0.0001 \\
   &                     & $n = 25$,  $T = 30$   &  0.0140 & -0.0086 & -0.0048 & -0.0016 & -0.0092 & -0.0029 & -0.0017 & -0.0016 & -0.0033 \\
   & $t_3$-distr. errors & $n = 49$,  $T = 100$  &  0.0023 & -0.0018 & -0.0003 & -0.0006 & -0.0011 &  0.0001 & -0.0002 & -0.0003 & -0.0009 \\
   &                     & $n = 100$, $T = 200$  &  0.0001 &  0.0001 & -0.0011 &  0.0005 & -0.0009 &  0.0000 & -0.0001 &  0.0000 & -0.0003 \\
   \hline	
   \end{tabular}	
\end{scriptsize}
\end{table}

\begin{table}
\caption{Root-mean-square errors of the QML estimates for all considered settings.}\label{table:MC_rmse}
\begin{scriptsize}
   \begin{tabular}{l l l ccccccccc}
   \hline
      \multicolumn{3}{c}{RMSE} & $a$ & $\psi_{11}$ & $\psi_{21}$ & $\psi_{12}$ & $\psi_{22}$ & $\pi_{11}$ & $\pi_{21}$ & $\pi_{12}$ & $\pi_{22}$ \\
   \hline
   \multicolumn{3}{c}{Model A, \emph{data-generating parameters}} & 1 & 0.5 & 0.1 & 0.1 & 0.5 & 0.3 & 0 & 0 & 0.3 \\
   &                     & $n = 25$,  $T = 30$   & 3.9133 & 1.4435 & 1.8352 & 1.8629 & 1.4039 & 1.0310 & 1.0018 & 0.9909 & 0.9897 \\
   &       Normal errors & $n = 49$,  $T = 100$  & 1.3972 & 0.5464 & 0.6884 & 0.7194 & 0.5580 & 0.3827 & 0.3774 & 0.3920 & 0.3846 \\
   &                     & $n = 100$, $T = 200$  & 0.6991 & 0.2737 & 0.3704 & 0.3482 & 0.2810 & 0.1944 & 0.1913 & 0.1899 & 0.1893 \\
   &                     & $n = 25$,  $T = 30$   & 2.9966 & 1.4007 & 2.2693 & 2.2523 & 1.3680 & 0.9902 & 0.9227 & 1.0048 & 1.0624 \\
   & $t_3$-distr. errors & $n = 49$,  $T = 100$  & 0.9105 & 0.5267 & 0.8483 & 0.8098 & 0.5430 & 0.4066 & 0.3638 & 0.3685 & 0.3900 \\
   &                     & $n = 100$, $T = 200$  & 0.4691 & 0.2706 & 0.4123 & 0.4119 & 0.2773 & 0.1856 & 0.1656 & 0.1593 & 0.1875 \\
   \multicolumn{3}{c}{Model B, \emph{data-generating parameters}} & 1 & 0.5 & 0.1 & 0.1 & 0.5 & 0 & 0 & 0 & 0 \\
   &                     & $n = 25$,  $T = 30$   & 4.3186 & 2.8828 & 4.1247 & 4.0821 & 2.9765 & 1.0613 & 0.9657 & 0.9314 & 1.0718 \\
   &       Normal errors & $n = 49$,  $T = 100$  & 0.7958 & 0.6037 & 1.1766 & 1.1640 & 0.5773 & 0.4061 & 0.3906 & 0.3873 & 0.4101 \\
   &                     & $n = 100$, $T = 200$  & 0.3801 & 0.3000 & 0.6349 & 0.6245 & 0.2896 & 0.2050 & 0.1844 & 0.1891 & 0.2087 \\
   &                     & $n = 25$,  $T = 30$   & 2.2659 & 1.6171 & 4.1376 & 4.0416 & 1.5470 & 1.0687 & 0.9542 & 0.9499 & 1.1071 \\
   & $t_3$-distr. errors & $n = 49$,  $T = 100$  & 0.8402 & 0.5914 & 1.6691 & 1.7073 & 0.5883 & 0.4076 & 0.3529 & 0.3483 & 0.4107 \\
   &                     & $n = 100$, $T = 200$  & 0.4101 & 0.3141 & 0.9369 & 0.9369 & 0.3000 & 0.2017 & 0.1665 & 0.1698 & 0.2047 \\
   \multicolumn{3}{c}{Model C, \emph{data-generating parameters}} & 1 & 0.2 & 0.4 & 0.4 & 0.2 & 0.3 & 0 & 0 & 0.3 \\
   &                     & $n = 25$,  $T = 30$   & 3.8513 & 1.8904 & 2.5716 & 2.5115 & 1.8352 & 1.1072 & 0.9469 & 0.9642 & 1.0788 \\
   &       Normal errors & $n = 49$,  $T = 100$  & 1.3681 & 0.7447 & 0.9906 & 0.9952 & 0.7433 & 0.4130 & 0.3763 & 0.3789 & 0.4093 \\
   &                     & $n = 100$, $T = 200$  & 0.7024 & 0.3848 & 0.4801 & 0.5083 & 0.3694 & 0.2018 & 0.1870 & 0.1875 & 0.2073 \\
   &                     & $n = 25$,  $T = 30$   & 2.8111 & 1.8837 & 2.5893 & 2.6870 & 1.8888 & 1.0676 & 0.9646 & 0.9569 & 1.0201 \\
   & $t_3$-distr. errors & $n = 49$,  $T = 100$  & 0.9835 & 0.7359 & 1.0165 & 1.0550 & 0.7755 & 0.4112 & 0.3547 & 0.3723 & 0.3998 \\
   &                     & $n = 100$, $T = 200$  & 0.4660 & 0.3717 & 0.5062 & 0.5149 & 0.3806 & 0.1978 & 0.1754 & 0.1734 & 0.1968 \\
   \hline	
   \end{tabular}	
\end{scriptsize}
\end{table}

\section{Real-World Example: Berlin Real-Estate Prices}

In the following section, we will show the application of the process to a real example. For this purpose, we model the changes in the average sales prices of undeveloped land, developed land and condominiums in Berlin. The data are average monthly average prices per square metre of land or living space in each post-code region from 2002 to 2014. The average prices across all spatial locations are depicted in Figure \ref{fig:data} as time series process.  

\begin{figure}
  \centering
  \includegraphics[width=0.8\textwidth]{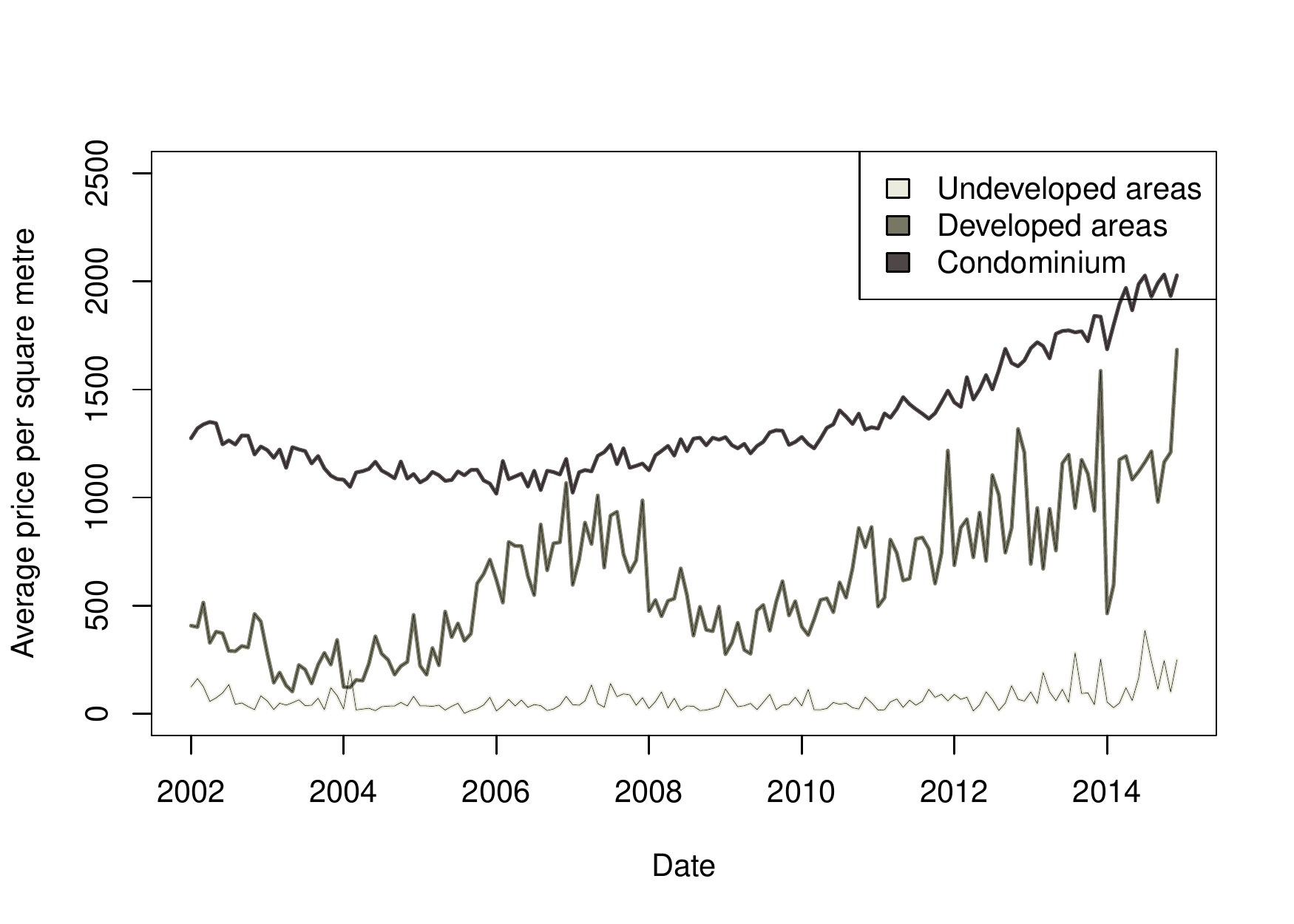}
  \caption{Monthly average prices (Euro/$m^2$) of undeveloped, developed land, and condominium across 190 post-code regions in Berlin from January 2002 to December 2014.}\label{fig:data}
\end{figure}

First of all, it must be noted that there are typically geographical dependencies in the housing market, unlike for other financial markets where trading can take place regardless of location. One of the most important factors in a purchase decision is the location of the property, whereby prices are also influenced by the surrounding neighbourhood. This dependency is in turn influenced by road connections, infrastructure or public transport. Furthermore, the price in the past plays also a role, as is typical for all time series. The temporal proximity creates a causal statistical dependence that decreases the further one looks into the past. These dependencies are observed both in the price process and in the risks in terms of price changes.

This motivates the application of the proposed multivariate spatiotemporal ARCH process to property sales returns. More precisely, we analysed the logarithmic, monthly returns of the average sales prices in each category for all $n = 190$ postcode regions in Berlin. The length of the time series is accordingly $T = 156$ and the process is $p = 3$-dimensional. To display the log-return process, Figure \ref{fig:returns} shows the average log-returns across all locations in the temporal domain. Especially for the developed and undeveloped land, there were much fewer sales, such that the average returns are more volatile. In the case of no transactions in certain months and areas, we assumed that the average sales price did not change and, thus, the log-returns are zero. More precisely, we randomly simulated a normally distributed return with mean zero and standard deviation 0.0001 to not have positive probability for zero returns. In future, a more detailed analysis including a zero-transaction model would be interesting, especially for smaller time granularities and spatial locations.

\begin{figure}
  \centering
  \includegraphics[width=0.6\textwidth]{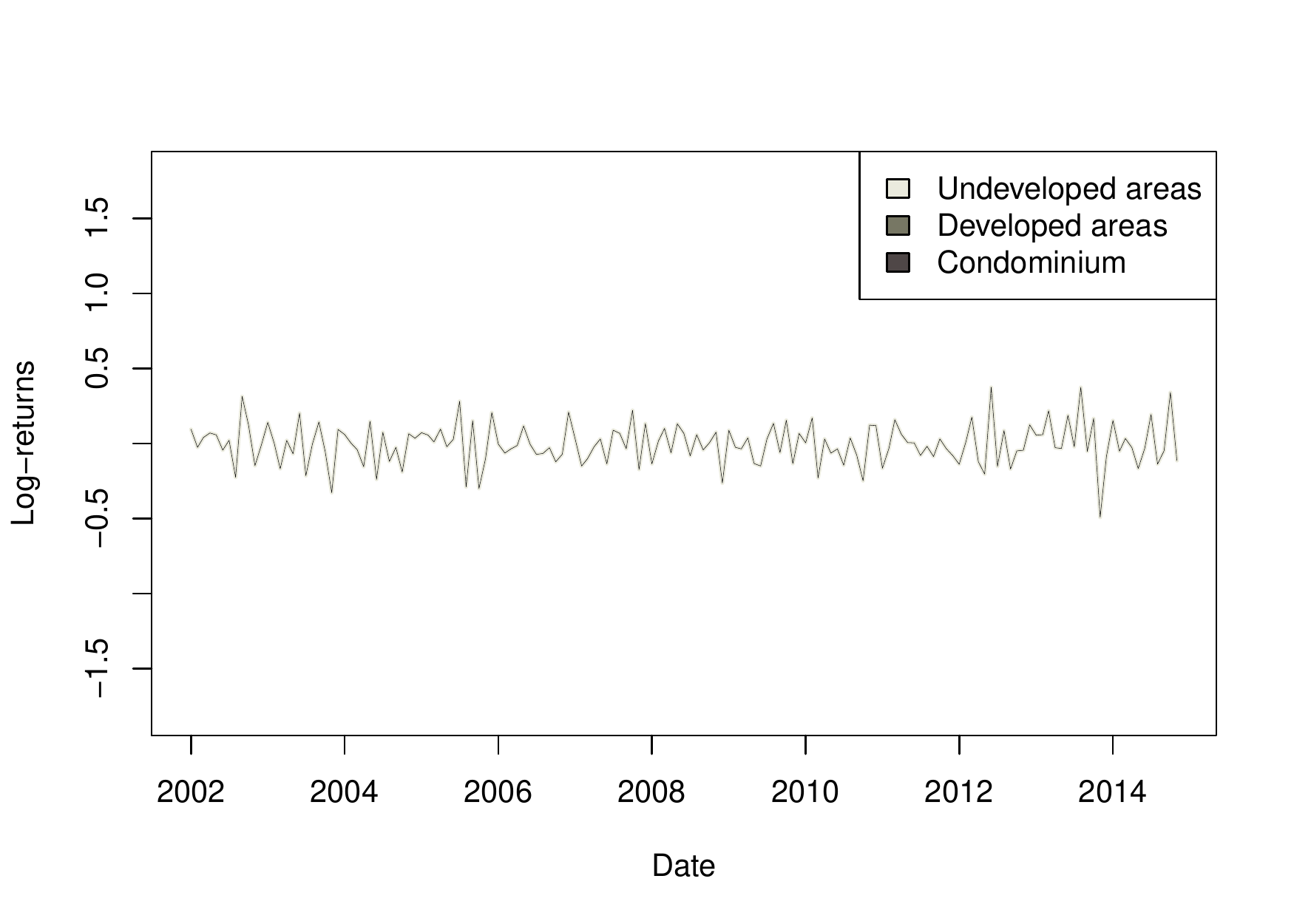}\\
  \includegraphics[width=0.6\textwidth]{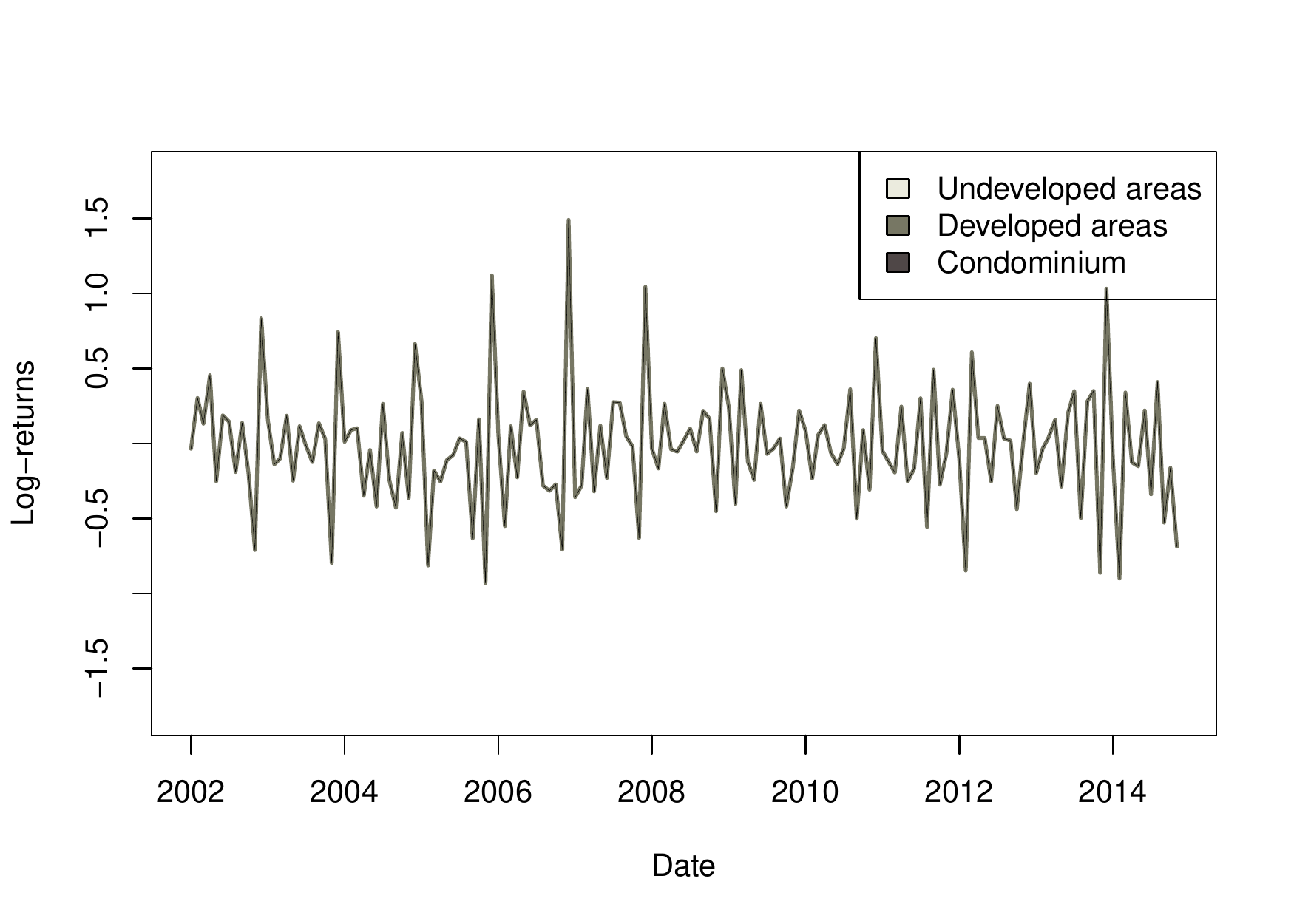}\\
  \includegraphics[width=0.6\textwidth]{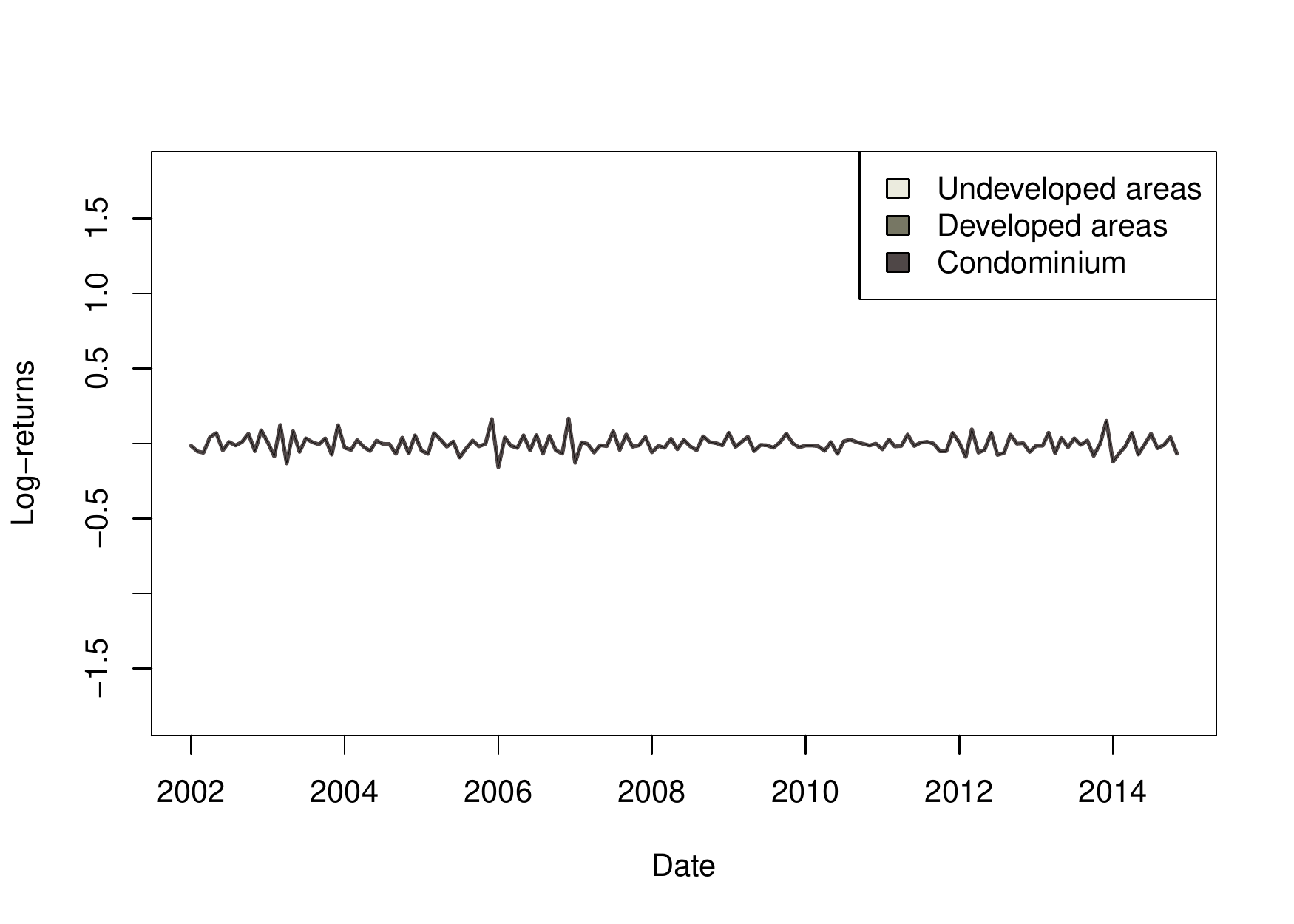}\\
  \caption{Monthly average log-returns of the price series displayed in Figure \ref{fig:data}. Note that the log-returns were computed from the raw spatiotemporal data and averaged afterwards to be depicted as time series.}\label{fig:returns}
\end{figure}

The estimated parameters of the multivariate spatiotemporal ARCH process are reported in Table \ref{table:qmlexample} along with their standard errors. The unconditional variance levels were assumed to be constant across space, but vary with the property types. Bearing in mind that we have modelled monthly returns, we observe interesting results. First, the spatial dependence is dominated by the temporal dependence that appears to be more important. Second, spatial spill-overs are positive (i.e., we observed clusters of higher variances/risks), but they are only significant for developed land. When increasing the temporal intervals from monthly to quarterly data, these spatial interactions will disappear. The same holds when grouping the spatial locations to larger areas. This highlights the importance of spatial GARCH models for small spatial units and time granularities (as it is also well-known in finance). Third, cross-variable spill-overs are only significant at the first temporal lag (i.e., after one time period). More precisely, we see significant interactions only between developed and undeveloped land, but not for condominium prices. It is important to bear in mind that the spatial and temporal ARCH effects will also cover changes in the variance due to latent variables. Fourth, the unconditional variance varies across the property types with developed land experiencing the highest variance, followed by the condominium and undeveloped land. Note that undeveloped land usually does not have and will not get building permission.


\begin{table}
\caption{QML estimates and standard errors of the empirical example. Spatial ARCH effects are highlighted in light green, temporal ARCH effects in dark green. Significant effects are marked by an asterisk (* $t$-value $>$ 1.9, ** $t$-value $>$ 2).}\label{table:qmlexample}
\begin{center}
\begin{scriptsize}
	\begin{tabular}{c l cc cc cc}
	\hline 
	& & \multicolumn{2}{c}{Undeveloped land} & \multicolumn{2}{c}{Developed land}   & \multicolumn{2}{c}{Condominium}      \\
	& & Estimate & Standard error            & Estimate & Standard error            & Estimate & Standard error            \\
	\hline
	\multicolumn{2}{l}{$\tilde{\xmat{A}}$} & -4.686** & 1.381 & 0.187 & 1.372 & -2.652* & 1.337 \\
	\hline
	              & Undeveloped land       & \cellcolor{col1} 0.111 & \cellcolor{col1} 0.074 &                  0.016   &                  0.075 &                 -0.057 &                  0.074 \\
	$\xmat{\Psi}$ & Developed land         &                  0.014 &                  0.064 & \cellcolor{col1} 0.144** & \cellcolor{col1} 0.062 &                  0.000 &                  0.064 \\
	              & Condominium            &                 -0.085 &                  0.090 &                  0.008   &                  0.090 & \cellcolor{col1} 0.113 & \cellcolor{col1} 0.086 \\
	\hline
	              & Undeveloped land       & \cellcolor{col2} 0.583** & \cellcolor{col2} 0.038 &                  0.129** &                  0.038 &                  -0.014   &                  0.038 \\
	$\xmat{\Pi}$  & Developed land         &                  0.080** &                  0.031 & \cellcolor{col2} 0.553** & \cellcolor{col2} 0.031 &                   0.027   &                  0.031 \\
	              & Condominium            &                 -0.028   &                  0.044 &                  0.078   &                  0.044 & \cellcolor{col2}  0.606** & \cellcolor{col2} 0.044 \\

	\hline
	\end{tabular}
\end{scriptsize}
\end{center}
\end{table}

\section{Summary and Conclusion}\label{sec:conclusion}

In this paper, we have introduced a multivariate spatiotemporal autoregressive model for the conditional heteroscedasticity (multivariate vec-spARCH). While ARCH and GARCH models are well-known in time-series econometrics and finance, spatiotemporal extensions typically did not account for spatial simultaneity. That is, for any geographical phenomena, spatial interactions occur instantaneous due to the spatial proximity between the observations. Instead, previous papers typically only allowed for time-lagged spatial dependence. The model introduced in this paper explicitly accounts for instantaneous spatial and cross-variable interactions and temporal dependence in the conditional variance. Thus, the model would also be suitable to model purely spatial data without the need of observations over time. In the empirical application, it gets obvious that the log-returns of several types of real estate are spatially autocorrelated. This indicate local clusters of increased volatilities and market risks -- even though the temporal dependence appears to be more important. Thus, we could show that there are temporally and spatially varying volatilities. Furthermore, we found significant cross-variable dependence in the first temporal lags, but no significant instantaneous cross-variable interactions. This again motivates the application of a multivariate spatiotemporal ARCH model in such studies.

For this new model, we discussed the parameter estimation using a quasi-maximum-likelihood (QML) approach. For this reason, the process is reformulated in a vec-representation and a log-squared transformation is applied to obtain multivariate spatiotemporal autoregressive process. We showed the consistency of the QML estimator under regular assumptions for the error process when the spatial and temporal dimensions increase. In the finite-sample case, we could see rapidly decreasing root-mean-square errors (RMSEs) in a series of simulations with different model specifications and error distributions. All our simulations could be performed in a reasonable amount of time using a standard computer. The required computational resources are usually the bottleneck of the QML approach due to the computation of the log-determinant of the Jacobian matrix.

There are many further directions for future research and potential fields of applications. First, we did only considered logarithmic structures in the volatility models, but no classical ARCH structures. However, since the multivariate spatiotemporal could be transformed to purely spatial models using the vec-representation, previous results of spatial ARCH and GARCH models could be applied. Furthermore, all these spatial econometric models rely on a (correctly) specified spatial weight matrix, which is, however, mostly unknown in practice. Hence, estimation methods for the entire spatial dependence structures (i.e., each spatial weight) are desirable from a practical perspective. Penalised methods seem to be promising in this case, because many links can be considered to be zero. 

Apart from applications in econometrics, also environmental and climate processes would be interesting and potential fields for application of the multivariate ARCH model. The process parameters can be interpreted as local risk measures, which is highly relevant in environmental studies. Furthermore, when considering our model as error process, spatially and temporally varying measurement or modelling uncertainties can be reflected in the statistical model. For example, this might be of interest for GNSS positioning in urban environments. Other fields, where local risks and cross-variable interactions are highly relevant, are epidemiological and medical studies.

\section*{Appendix}

\begin{appendix}

\section{Proofs}

\begin{proof}[Proof of Proposition \ref{prop:stability}]	
If all eigenvalues of $\xmat{S}_{np}^{-1}(\xmat{\Pi}' \otimes \xmat{I})$ are smaller than one and $(\xmat{I} - \xmat{S}_{np}^{-1}(\xmat{\Pi}' \otimes \xmat{I}))^{j} \rightarrow 0$ for an increasing power $j$, we get that 
\begin{equation}
	(\xmat{I} + \xmat{S}_{np}^{-1}(\xmat{\Pi}' \otimes \xmat{I}) + \ldots + (\xmat{S}_{np}^{-1}(\xmat{\Pi}' \otimes \xmat{I}))^j) vec(\tilde{\xmat{A}}) \rightarrow (\xmat{I} - \xmat{S}_{np}^{-1}(\xmat{\Pi}' \otimes \xmat{I}))^{-1} \, 
\end{equation}
and
\begin{equation}
	\ddot{\xvec{Y}}_t  =  (\xmat{I} - \xmat{S}_{np}^{-1}(\xmat{\Pi}' \otimes \xmat{I}))^{-1} + \sum_{i = 0}^{\infty} (\xmat{S}_{np}^{-1}(\xmat{\Pi}' \otimes \xmat{I}))^i \xmat{U}_{t-i} \, .
\end{equation}
The stability follows from the convergence of the $(\xmat{S}_{np}^{-1}(\xmat{\Pi}' \otimes \xmat{I}))^i$. If the spectral radius of $\xmat{S}_{np}^{-1}(\xmat{\Pi}' \otimes \xmat{I})$ is smaller than one, $\xmat{S}_{np}^{-1}(\xmat{\Pi}' \otimes \xmat{I}) \rightarrow 0$ (e.g., \citealt{gentle2017matrix}).
\end{proof}

\begin{proof}[Proof of Proposition \ref{prop:identification}]	
We have to show that	
\begin{equation*}
	\frac{1}{Tnp} E\left(\ln \mathcal{L}( \tilde{\xmat{A}}, \xmat{\Psi}, \xmat{\Pi} | \xmat{Y}_0)\right)  -  \frac{1}{Tnp} E\left(\ln \mathcal{L}( \tilde{\xmat{A}}_0, \xmat{\Psi}_0, \xmat{\Pi}_0 | \xmat{Y}_0)\right)  \leq 0, \label{eq:ident}
\end{equation*}
where the equality holds if and only if $\tilde{\xmat{A}} = \tilde{\xmat{A}}_0$, $\xmat{\Psi} = \xmat{\Psi}_0$, and $\xmat{\Pi} =  \xmat{\Pi}_0$. 
\begin{eqnarray*}
&& \frac{1}{Tnp} E\left(\ln \mathcal{L}( \tilde{\xmat{A}}, \xmat{\Psi}, \xmat{\Pi} | \xmat{Y}_0)\right) -  \frac{1}{Tnp} E\left(\ln \mathcal{L}( \tilde{\xmat{A}}_0, \xmat{\Psi}_0, \xmat{\Pi}_0 | \xmat{Y}_0)\right)\\
	& =  & \frac{1}{np} (\ln |\xmat{S}_{np}| - \ln |\xmat{S}_{np0}|) \\ 
                              &   & - \quad \frac{1}{2np} \sum_{t = 1}^{T} \left[ \xmat{S}_{np} \xmat{S}_{np0}^{-1} (vec(\tilde{\xmat{A}}_0 - \tilde{\xmat{A}}) + ((\xmat{\Pi}_0' - \xmat{\Pi}') \otimes \xmat{I}) \ddot{\xvec{Y}}_{t-1}) \right]' \\
                              &   & \qquad\qquad\qquad \times \left[ \xmat{S}_{np} \xmat{S}_{np0}^{-1} (vec(\tilde{\xmat{A}}_0 - \tilde{\xmat{A}}) + ((\xmat{\Pi}_0' - \xmat{\Pi}') \otimes \xmat{I}) \ddot{\xvec{Y}}_{t-1}) \right]\, \\
                              &   & - \quad \frac{1}{2np} tr\left( \xmat{S}_{np} \xmat{S}_{np0}^{-1} \xmat{S}_{np0}^{'-1} \xmat{S}_{np}'\right)\\
    & =  & \frac{1}{np} \ln |\xmat{S}_{np}\xmat{S}_{np0}^{-1}\xmat{S}_{np0}^{'-1}\xmat{S}_{np}'|^{1/np} \\ 
                              &   & - \quad \frac{1}{2np} \sum_{t = 1}^{T} \left[ \xmat{S}_{np} \xmat{S}_{np0}^{-1} (vec(\tilde{\xmat{A}}_0 - \tilde{\xmat{A}}) + ((\xmat{\Pi}_0' - \xmat{\Pi}') \otimes \xmat{I}) \ddot{\xvec{Y}}_{t-1}) \right]' \\
                              &   & \qquad\qquad\qquad \times \left[ \xmat{S}_{np} \xmat{S}_{np0}^{-1} (vec(\tilde{\xmat{A}}_0 - \tilde{\xmat{A}}) + ((\xmat{\Pi}_0' - \xmat{\Pi}') \otimes \xmat{I}) \ddot{\xvec{Y}}_{t-1}) \right]\, \\
                              &   & - \quad \frac{1}{2np} tr\left( \xmat{S}_{np} \xmat{S}_{np0}^{-1} \xmat{S}_{np0}^{'-1} \xmat{S}_{np}'\right)
\end{eqnarray*}

First,  we focus on the convergence of the quadratic term
\begin{equation}
	 \frac{1}{2np} \sum_{t = 1}^{T}  \xmat{V}_t' \xmat{V}_t \label{eq:vt}
\end{equation}
with
\begin{eqnarray*}
 \xmat{V}_t	& = & \xmat{S}_{np} \xmat{S}_{np0}^{-1} (vec(\tilde{\xmat{A}}_0 - \tilde{\xmat{A}}) + ((\xmat{\Pi}_0' - \xmat{\Pi}') \otimes \xmat{I}) \ddot{\xvec{Y}}_{t-1})\\
	& = & \left( \xmat{I} - ((\xmat{\Psi}' - \xmat{\Psi}_0') \otimes \xmat{W}) \xmat{S}_{np0}^{-1} \right) (vec(\tilde{\xmat{A}}_0 - \tilde{\xmat{A}}) + ((\xmat{\Pi}_0' - \xmat{\Pi}') \otimes \xmat{I}) \ddot{\xvec{Y}}_{t-1}) \, .
\end{eqnarray*}
Thus, under Assumption \ref{ass:5}, \eqref{eq:vt} is equal to zero if and only if $\tilde{\xmat{A}} = \tilde{\xmat{A}}_0$ and $\xmat{\Pi} = \xmat{\Pi}_0$. Note that $\tilde{\xmat{A}}_0$ is constant across time, while $\ddot{\xvec{Y}}_t$ is varying due to the random variation in $\xmat{\Xi}_t$. Thus, if $T = 1$, $\tilde{\xmat{A}}_0$ must be assumed to be constant across space, i.e., $\tilde{\xmat{A}}_0 = \tilde{a}_0 \xvec{1}_n$, to obtain identifiability.

Second,
\begin{equation*}
	\frac{1}{2np} tr\left( \xmat{S}_{np} \xmat{S}_{np0}^{-1} \xmat{S}_{np0}^{'-1} \xmat{S}_{np}'\right)	
\end{equation*}
is only a function of $\xmat{\Psi}$ and 
\begin{equation}
	\frac{1}{np} tr\left( \xmat{S}_{np} \xmat{S}_{np0}^{-1} \xmat{S}_{np0}^{'-1} \xmat{S}_{np}'\right)	 \geq |\xmat{S}_{np}\xmat{S}_{np0}^{-1}\xmat{S}_{np0}^{'-1}\xmat{S}_{np}'|^{1/np}.  \label{eq:amgm}
\end{equation}
by the arithmetic and geometric means inequality of eigenvalues of $\xmat{S}_{np}$. Further,
\begin{equation*}
	\xmat{S}_{np}(\xmat{\Psi})\xmat{S}_{np0}^{-1} = \xmat{I} - ((\xmat{\Psi}' - \xmat{\Psi}_0') \otimes \xmat{W}) \xmat{S}_{np0}^{-1} 
\end{equation*}
is equal to $\xmat{I}$ if and only if $\xmat{\Psi} = \xmat{\Psi}_0$ as $n \to \infty$. Then, the equality of \eqref{eq:amgm} holds. 

As consequence, \eqref{eq:ident} is equal to zero if and only if the parameters coincide with their true values $\xmat{\Psi}_0$, $\xmat{\Pi}_0$, and $\tilde{\xmat{A}}_0$. Hence, the parameters are uniquely identifiable.

\end{proof}

\begin{lemma}[\cite{yang2017identification}, Lemma 1]\label{lemma:UB_Psi}
The sequences $\xmat{S}_{np}$ and $\xmat{S}_{np}^{-1}$ are uniformly bounded in column sum norm, uniformly in $\xmat{\Psi}$, if $\sup_{\xmat{\Psi},n} || \xmat{\Psi}' \otimes \xmat{W}_n ||_1 < 1$. They are uniformly bounded in row sum norm, uniformly in $\xmat{\Psi}$, if $\sup_{\xmat{\Psi},n} || \xmat{\Psi}' \otimes \xmat{W}_n ||_\infty < 1$.
\end{lemma}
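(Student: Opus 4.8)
The plan is to exploit the submultiplicativity of the two operator norms involved together with a Neumann series expansion, treating the two statements by one symmetric argument. Recall that the column sum norm $||\cdot||_1$ is the matrix norm induced by the $\ell_1$ vector norm and the row sum norm $||\cdot||_\infty$ is the one induced by the $\ell_\infty$ vector norm; in particular both are submultiplicative. I would write $||\cdot||$ for either norm and let $c = \sup_{\xmat{\Psi},n} ||\xmat{\Psi}' \otimes \xmat{W}_n|| < 1$ denote the common uniform bound assumed in the respective hypothesis, so that the same computation specializes to $||\cdot||_1$ under the first hypothesis and to $||\cdot||_\infty$ under the second.

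First, the bound on $\xmat{S}_{np} = \xmat{I} - \xmat{\Psi}' \otimes \xmat{W}$ itself is immediate from the triangle inequality: $||\xmat{S}_{np}|| \leq ||\xmat{I}|| + ||\xmat{\Psi}' \otimes \xmat{W}|| \leq 1 + c < 2$. Since $c$ does not depend on $\xmat{\Psi}$ or $n$, this establishes the uniform bound for $\xmat{S}_{np}$ in both norms.

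Second, I would handle $\xmat{S}_{np}^{-1}$ via the Neumann series. Because the spectral radius is dominated by any submultiplicative norm, $\rho(\xmat{\Psi}' \otimes \xmat{W}) \leq ||\xmat{\Psi}' \otimes \xmat{W}|| \leq c < 1$, so $\xmat{S}_{np}$ is invertible with $\xmat{S}_{np}^{-1} = \sum_{k=0}^{\infty} (\xmat{\Psi}' \otimes \xmat{W})^{k}$. Applying the triangle inequality term by term and then submultiplicativity yields $||\xmat{S}_{np}^{-1}|| \leq \sum_{k=0}^{\infty} ||\xmat{\Psi}' \otimes \xmat{W}||^{k} \leq \sum_{k=0}^{\infty} c^{k} = (1-c)^{-1}$. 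As $c < 1$ is a single constant valid for every $\xmat{\Psi}$ in the parameter space and every $n$, the bound $(1-c)^{-1}$ is uniform in $\xmat{\Psi}$ and $n$, which is precisely the claim.

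The argument is essentially routine, so there is no serious obstacle; the only point that genuinely requires care is verifying that the supremum hypothesis delivers a \emph{single} constant $c<1$ controlling the geometric series uniformly in both $\xmat{\Psi}$ and $n$. This uniformity is what guarantees that the resulting bounds on $||\xmat{S}_{np}^{-1}||$ do not deteriorate as the spatial dimension $n$ grows, and it is exactly this feature that makes the lemma useful for the later asymptotic arguments.
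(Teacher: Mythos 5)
Your proof is correct, and it is essentially the standard argument behind this result: the paper itself gives no proof, deferring to \cite{yang2017identification} (Lemma 1), where the bound is obtained by exactly this route --- triangle inequality for $\xmat{S}_{np}$ and a Neumann series expansion of $\xmat{S}_{np}^{-1}$ controlled by the uniform constant $c<1$. Nothing is missing; in particular you correctly isolate the only delicate point, namely that the supremum over both $\xmat{\Psi}$ and $n$ yields a single geometric ratio so the bounds do not degrade as $n\to\infty$.
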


\begin{lemma}\label{lemma:conv}
Under Assumptions \ref{ass:1}, \ref{ass:4} and \ref{ass:5}, it holds for an $np$-dimensional non-stochastic, uniformly bounded matrix $\xmat{B}_{np}$ that
\begin{small}
\begin{eqnarray}
	\frac{1}{npT} \sum_{t = 1}^{T} \ln vec(\xmat{Y}_t^{(2)})' \xmat{B}_{np} \ln vec(\xmat{Y}_t^{(2)}) - \frac{1}{npT} E\left[ \sum_{t = 1}^{T} \ln vec(\xmat{Y}_t^{(2)})' \xmat{B}_{np} \ln vec(\xmat{Y}_t^{(2)}) \right] = O_p\left(\frac{1}{\sqrt{npT}}\right), \\
	\frac{1}{npT} \sum_{t = 1}^{T} \ln vec(\xmat{Y}_t^{(2)})' \xmat{B}_{np} vec(\xmat{U}_t) - \frac{1}{npT} E\left[ \sum_{t = 1}^{T} \ln vec(\xmat{Y}_t^{(2)})' \xmat{B}_{np} vec(\xmat{U}_t)\right] = O_p\left(\frac{1}{\sqrt{npT}}\right), \\
	\frac{1}{npT} \sum_{t = 1}^{T} vec(\xmat{U}_t)' \xmat{B}_{np} vec(\xmat{U}_t) - \frac{1}{npT} E\left[ \sum_{t = 1}^{T} vec(\xmat{U}_t)' \xmat{B}_{np} vec(\xmat{U}_t)\right] = O_p\left(\frac{1}{\sqrt{npT}}\right) , & 
\end{eqnarray}
\end{small}
where $E\left[ \sum_{t = 1}^{T} \ln vec(\xmat{Y}_t^{(2)})' \xmat{B}_{np} \ln vec(\xmat{Y}_t^{(2)}) \right]$ is $O(1)$, $E\left[ \sum_{t = 1}^{T} \ln vec(\xmat{Y}_t^{(2)})' \xmat{B}_{np} vec(\xmat{U}_t)\right]$ is $O(1/T)$ and $E\left[ \sum_{t = 1}^{T} vec(\xmat{U}_t)' \xmat{B}_{np} vec(\xmat{U}_t)\right]$ is $O(1)$.
\end{lemma}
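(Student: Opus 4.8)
The plan is to reduce all three displays to mean and variance computations for quadratic forms in the i.i.d.\ innovation sequence, and then to invoke Chebyshev's inequality. The starting point is the moving-average representation of $\ddot{\xvec{Y}}_t = \ln vec(\xmat{Y}_t^{(2)})$ obtained in the proof of Proposition \ref{prop:stability}: writing $\xmat{M} = \xmat{S}_{np}^{-1}(\xmat{\Pi}' \otimes \xmat{I})$ and $\xvec{\mu} = (\xmat{I} - \xmat{M})^{-1} vec(\tilde{\xmat{A}})$, we have $\ddot{\xvec{Y}}_t = \xvec{\mu} + \sum_{i=0}^{\infty} \xmat{M}^i vec(\xmat{U}_{t-i})$. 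By Assumptions \ref{ass:3} and \ref{ass:4} the spectral radius of $\xmat{M}$ is below one, and combined with Lemma \ref{lemma:UB_Psi} the powers $\xmat{M}^i$ are uniformly bounded in absolute row and column sums with norms decaying geometrically in $i$. This is what makes the quadratic forms behave like sums of $npT$ weakly dependent terms.

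First I would verify the stated orders of the expectations. Expanding each form with the moving-average representation and using $E(vec(\xmat{U}_t)) = \xvec{0}$ together with $E(vec(\xmat{U}_t) vec(\xmat{U}_s)') = \sigma_u^2 \xmat{I}$ for $t = s$ and $\xvec{0}$ otherwise (Assumption \ref{ass:2}), only the diagonal lag terms survive. For the first form this gives $E(\ddot{\xvec{Y}}_t' \xmat{B}_{np} \ddot{\xvec{Y}}_t) = \xvec{\mu}' \xmat{B}_{np} \xvec{\mu} + \sigma_u^2 \sum_{i \geq 0} \mathrm{tr}((\xmat{M}^i)' \xmat{B}_{np} \xmat{M}^i)$, where uniform boundedness of $\xmat{B}_{np}$ and of the $\xmat{M}^i$ makes each piece $O(np)$ and renders the lag sum finite; the third form is immediate from $E(vec(\xmat{U}_t)' \xmat{B}_{np} vec(\xmat{U}_t)) = \sigma_u^2 \mathrm{tr}(\xmat{B}_{np}) = O(np)$. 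For the cross term, the orthogonality of innovations at distinct times removes the dominant accumulation, leaving a contribution governed only by the initial conditioning on $\xvec{Y}_0$; this is of smaller order, matching the stated $O(1/T)$.

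The substance of the lemma is the variance bound for the centred, normalised averages. For the pure-innovation form, independence across $t$ gives $\mathrm{Var}\big(\sum_{t} vec(\xmat{U}_t)' \xmat{B}_{np} vec(\xmat{U}_t)\big) = \sum_{t} \mathrm{Var}\big(vec(\xmat{U}_t)' \xmat{B}_{np} vec(\xmat{U}_t)\big)$, and each summand is a fourth-moment functional controlled by Assumption \ref{ass:2} and by $\mathrm{tr}(\xmat{B}_{np}' \xmat{B}_{np}) = O(np)$; the sum is therefore $O(npT)$, so the variance of the $1/(npT)$-normalised average is $O(1/(npT))$. For the mixed and the $\ddot{\xvec{Y}}$-forms I would substitute the moving-average expansion, organise the double time sum by lag, and use the geometric decay of $\|\xmat{M}^i\|$ so that the off-diagonal time covariances are summable; the short-range temporal dependence keeps the variance of the sum at $O(npT)$ rather than $O((npT)^2)$, preserving the rate. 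Chebyshev's inequality applied to each centred average then yields the $O_p(1/\sqrt{npT})$ conclusion.

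I expect the main obstacle to be the variance of the $\ddot{\xvec{Y}}_t' \xmat{B}_{np} \ddot{\xvec{Y}}_t$ term, where $\ddot{\xvec{Y}}_t$ is temporally correlated and the second moment of the quadratic form expands into fourth moments of the innovations weighted by four factors of $\xmat{M}$-powers across two time points. Here one must (a) reduce these fourth moments to the bound of Assumption \ref{ass:2}, tracking which coincidences of the summation indices produce non-vanishing contributions, and (b) show that the resulting multiple sums over lags and over the $np$ coordinates remain $O(npT)$ by combining the geometric decay of $\|\xmat{M}^i\|$ with the uniform row- and column-sum bounds of Lemma \ref{lemma:UB_Psi} and Assumption \ref{ass:4}. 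This bookkeeping is the multivariate, spatiotemporal analogue of the corresponding arguments in \cite{Lee04} and \cite{Yu08}, and it is where the joint divergence $n, T \to \infty$ of Assumption \ref{ass:5} guarantees that the effective sample size is $npT$.
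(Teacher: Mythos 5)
The paper itself offers no proof of this lemma: it is stated bare, with the argument implicitly deferred to the corresponding lemmas of \cite{Yu08} and \cite{yang2017identification}. So there is no ``paper proof'' to compare against; judged on its own merits, your route --- moving-average representation of $\ddot{\xvec{Y}}_t$, mean and variance computations for the quadratic forms, Chebyshev --- is exactly the standard one of those references. For the first and third displays the outline is sound: each expectation is $O(np)$ per time point, hence $O(1)$ after the $1/(npT)$ normalisation that the lemma's statement implicitly intends, and independence across $t$ plus the fourth-moment part of Assumption \ref{ass:2} gives variance $O(npT)$ for the sums, so Chebyshev delivers the $O_p(1/\sqrt{npT})$ rates.

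The genuine gap is your treatment of the cross term. Your own expansion $\ddot{\xvec{Y}}_t = \xvec{\mu} + \sum_{i \geq 0} \xmat{M}^i vec(\xmat{U}_{t-i})$ contains the $i = 0$ term, so $\ddot{\xvec{Y}}_t$ is correlated with $vec(\xmat{U}_t)$ \emph{at the same time point}; with the correct reduced form the coefficient of $vec(\xmat{U}_t)$ is $\xmat{S}_{np0}^{-1}$ (this factor is dropped both in the paper's display and in your $\xvec{\mu}$, but it is precisely the instantaneous spatial operator), and hence
\begin{equation*}
E\bigl[\ddot{\xvec{Y}}_t' \xmat{B}_{np}\, vec(\xmat{U}_t)\bigr] \;=\; \sigma_u^2\, tr\bigl(\xmat{S}_{np0}^{'-1} \xmat{B}_{np}\bigr) \;=\; O(np)
\end{equation*}
generically (take $\xmat{B}_{np} = \xmat{I}$). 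Orthogonality of innovations at distinct times kills only the $i \geq 1$ terms, not this dominant one, so the normalised expectation is $O(1)$, not $O(1/T)$, and no bookkeeping of the initial condition $\xvec{Y}_0$ can rescue the claim. Your argument would be valid only if $\ddot{\xvec{Y}}_t$ were replaced by the predetermined $\ddot{\xvec{Y}}_{t-1}$, in which case the expectation is exactly zero by the martingale-difference property of $\xmat{U}_t$. In short, you have endorsed the lemma's stated $O(1/T)$ with reasoning that your own representation contradicts; a correct blind proof has to either derive the $O(1)$ order for the contemporaneous form --- and flag that the lemma's $O(1/T)$ is an artefact of the fixed-effects demeaning in \cite{Yu08}, which has no counterpart here --- or prove the statement for the lagged form. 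Note that the centred rates $O_p(1/\sqrt{npT})$ in all three displays are unaffected by this issue.

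A secondary soft spot: you infer geometric decay of $\|\xmat{M}^i\|$ in row- and column-sum norms from ``spectral radius below one plus Lemma \ref{lemma:UB_Psi}''. That lemma bounds $\xmat{S}_{np}^{-1}$ but provides no contraction for $\xmat{M} = \xmat{S}_{np}^{-1}(\xmat{\Pi}' \otimes \xmat{I})$, and a spectral radius below one does not by itself yield $\|\xmat{M}^i\|_1, \|\xmat{M}^i\|_\infty \leq c\rho^i$ uniformly in $n$. You need either $\|\xmat{M}\|_1, \|\xmat{M}\|_\infty < 1$ uniformly, or an explicit uniform summability condition on $\sum_{i} \|\xmat{M}^i\|$, before the lag sums in your variance bounds for the $\ddot{\xvec{Y}}$-quadratic form can be closed.
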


\begin{proof}[Proof of Theorem \ref{th:consistency}]	
	The proof of the theorem consists of two parts; first, the identification of the parameters, and, second, the uniform and equicontinuous convergence of $\frac{1}{npT}\ln \mathcal{L}_{nT}(\xvec{\vartheta}|\xvec{Y}_0)$ to $\frac{1}{nT} Q(\xvec{\vartheta}|\xvec{Y}_0)$ in probability with $\xvec{\vartheta}_0$ being a unique maximiser of $Q(\xvec{\vartheta})$. Then, the consistency of the QML estimator follows.
	\begin{itemize}
		\item[1.] The unique identification of the parameters is shown in the proof of Proposition \ref{prop:identification}.
		\item[2.] Let 
		\[ \tilde{\xvec{U}}_t(\vartheta) = \xmat{S}_{np} \ln vec(\xmat{Y}_t^{(2)}) - vec(\tilde{\xmat{A}}) - (\xmat{\Pi}' \otimes \xmat{I})  \ln vec(\xmat{Y}_{t-1}^{(2)}) \]
		 and
		\[ \xvec{U}_t = \xmat{S}_{np0} \ln vec(\xmat{Y}_t^{(2)}) - vec(\tilde{\xmat{A}}_0) - (\xmat{\Pi}_0' \otimes \xmat{I})  \ln vec(\xmat{Y}_{t-1}^{(2)}) , \]
		the true error vector of the transformed process, i.e., $\ln vec(\xmat{\Xi}_t^{(2)})$. Furthermore, let $\ddot{\xvec{Y}}_t = \ln vec(\xmat{Y}_t^{(2)})$ and 
		\[ \tilde{\xvec{U}}_t(\xi) = \xvec{U}_t - (\xmat{\Psi}' - \xmat{\Psi}_0')\otimes \xmat{W} \ddot{\xvec{Y}}_t - vec(\tilde{\xmat{A}} - \tilde{\xmat{A}}_0) - ((\xmat{\Pi}' - \xmat{\Pi}_0') \otimes \xmat{I}) \ddot{\xvec{Y}}_{t-1}  \]
		with $\xi$ being the differences in the parameters. Then,
		\begin{eqnarray*}
			\tilde{\xvec{U}}_t(\xi)'\tilde{\xvec{U}}_t(\xi) & = & \xvec{U}_t'\xvec{U}_t - vec(\tilde{\xmat{A}} - \tilde{\xmat{A}}_0)'vec(\tilde{\xmat{A}} - \tilde{\xmat{A}}_0) \\
			& & + \quad  (\xmat{\Psi}' - \xmat{\Psi}_0')' \otimes \xmat{W}' \ddot{\xvec{Y}}_t' \ddot{\xvec{Y}}_t \xmat{W} \otimes (\xmat{\Psi}' - \xmat{\Psi}_0') \\
			& & - \quad  ((\xmat{\Pi}' - \xmat{\Pi}_0') \otimes \xmat{I})' \ddot{\xvec{Y}}_{t-1}' \ddot{\xvec{Y}}_{t-1} ((\xmat{\Pi}' - \xmat{\Pi}_0') \otimes \xmat{I}) \\
			& & + \quad  2 (\xmat{\Psi}' - \xmat{\Psi}_0')' \otimes \xmat{W}' \ddot{\xvec{Y}}_t' \ddot{\xvec{Y}}_{t-1} ((\xmat{\Pi}' - \xmat{\Pi}_0') \otimes \xmat{I}) \\
			& & - \quad  2 (\xmat{\Psi}' - \xmat{\Psi}_0')' \otimes \xmat{W}' \ddot{\xvec{Y}}_t' \xvec{U}_t\\
			& & - \quad  2 ((\xmat{\Pi}' - \xmat{\Pi}_0') \otimes \xmat{I})' \ddot{\xvec{Y}}_{t-1}' \xvec{U}_t\\
			& & + \quad  2 (\xmat{\Psi}' - \xmat{\Psi}_0')'\otimes \xmat{W}' \ddot{\xvec{Y}}_{t}'  vec(\tilde{\xmat{A}} - \tilde{\xmat{A}}_0)\\
			& & + \quad  2 ((\xmat{\Pi}' - \xmat{\Pi}_0') \otimes \xmat{I})' \ddot{\xvec{Y}}_{t-1}'  vec(\tilde{\xmat{A}} - \tilde{\xmat{A}}_0)\\
			& & - \quad  2 vec(\tilde{\xmat{A}} - \tilde{\xmat{A}}_0)'   \xvec{U}_t \, .
		\end{eqnarray*}
		Using Lemmata \ref{lemma:UB_Psi} and \ref{lemma:conv}, it follows that
		\begin{itemize}
			\item[*] $\frac{1}{npT} \sum_{t = 1}^{T} \xvec{U}_t'\xvec{U}_t - \frac{1}{npT} E \left[ \sum_{t = 1}^{T} \xvec{U}_t'\xvec{U}_t\right] \overset{p}{\to} 0, $
			\item[*] $\frac{1}{npT} \sum_{t = 1}^{T} (\xmat{W} \otimes \xmat{I})' \ddot{\xvec{Y}}_t' \ddot{\xvec{Y}}_t(\xmat{W} \otimes \xmat{I}) - \frac{1}{npT} E \left[ \sum_{t = 1}^{T}  (\xmat{W} \otimes \xmat{I})' \ddot{\xvec{Y}}_t' \ddot{\xvec{Y}}_t(\xmat{W} \otimes \xmat{I}) \right] \overset{p}{\to} 0, $
			\item[*] $\frac{1}{npT} \sum_{t = 1}^{T} \ddot{\xvec{Y}}_{t-1}' \ddot{\xvec{Y}}_{t-1} - \frac{1}{npT} E \left[ \sum_{t = 1}^{T}  \ddot{\xvec{Y}}_{t-1}' \ddot{\xvec{Y}}_{t-1} \right] \overset{p}{\to} 0, $
			\item[*] $\frac{1}{npT} \sum_{t = 1}^{T} (\xmat{W} \otimes \xmat{I})' \ddot{\xvec{Y}}_t'  \ddot{\xvec{Y}}_{t-1} - \frac{1}{npT} E \left[ \sum_{t = 1}^{T}  (\xmat{W} \otimes \xmat{I})' \ddot{\xvec{Y}}_t'  \ddot{\xvec{Y}}_{t-1} \right] \overset{p}{\to} 0, $
			\item[*] $\frac{1}{npT} \sum_{t = 1}^{T} (\xmat{W} \otimes \xmat{I})' \ddot{\xvec{Y}}_t'  \xvec{U}_t - \frac{1}{npT} E \left[ \sum_{t = 1}^{T}  (\xmat{W} \otimes \xmat{I})' \ddot{\xvec{Y}}_t'  \xvec{U}_t \right] \overset{p}{\to} 0, $ and
			\item[*] $\frac{1}{npT} \sum_{t = 1}^{T} \ddot{\xvec{Y}}_{t-1}'  \xvec{U}_t - \frac{1}{npT} E \left[ \sum_{t = 1}^{T}  \ddot{\xvec{Y}}_{t-1}'  \xvec{U}_t \right] \overset{p}{\to} 0$.
		\end{itemize}
		Because $\tilde{\xmat{A}} - \tilde{\xmat{A}}_0$ is uniformly bounded, the remaining terms converge to zero in probability by Chebycheff's inequality. Moreover, as $\xvec{\vartheta}  = (vec(\tilde{\xmat{A}})', vec(\xmat{\Psi})', vec(\xmat{\Pi})')'$ is bounded in $\Theta$, we get that
		\[ \frac{1}{npT} \sum_{t=1}^{T} \tilde{\xvec{U}}_t(\xi)'\tilde{\xvec{U}}_t(\xi) - \frac{1}{npT} E \left[\sum_{t=1}^{T} \tilde{\xvec{U}}_t(\xi)'\tilde{\xvec{U}}_t(\xi)\right] \overset{p}{\to} 0 \]
		uniformly in $\xvec{\vartheta} \in \Theta$, and, thus, 
		\[ \frac{1}{npT}\ln \mathcal{L}_{nT}(\tilde{\xmat{A}}, \xmat{\Psi}, \xmat{\Pi}|\xvec{Y}_0) - \frac{1}{npT} Q(\tilde{\xmat{A}}, \xmat{\Psi}, \xmat{\Pi}|\xvec{Y}_0) \overset{p}{\to} 0 \]
		uniformly in $\xvec{\vartheta} \in \Theta$. 
		
		Further, the equicontinuity of the expected likelihood must be shown. Let $\xmat{\iota}_{ij}$ be a zero matrix with the $(i,j)$-th entry equal to one. First, $\frac{1}{np} \frac{\partial \ln |\xmat{S}_{np}|}{\partial \psi_{ij}} = \frac{1}{np} tr(\xmat{S}_{np}^{'-1} (\xmat{\iota}_{ij} \otimes \xmat{W}))$ is uniformly bounded by a constant, uniformly in $\xmat{\Psi}$, because $\xmat{S}_{np}^{-1}$ is uniformly bounded according to Lemma \ref{lemma:UB_Psi}. Secondly, $\frac{1}{np}  \ln |\xmat{S}_{np}|$ is a Lipschitz function in $\xmat{\Psi}$ and, thus, uniformly equicontinuous. Thirdly, 
		\begin{eqnarray*}
			\sum_{t = 1}^{T} \tilde{\xvec{U}}_t(\xi)'\tilde{\xvec{U}}_t(\xi) & = & \left[\xmat{S}_{np} \xmat{S}_{np0}^{-1} vec(\tilde{\xmat{A}}_0 - \tilde{\xmat{A}}) +  \xmat{S}_{np} \xmat{S}_{np0}^{-1} (\xmat{\Pi}'_0 - \xmat{\Pi}')\ddot{\xvec{Y}}_{t-1}\right]' \\
			& & \quad \times \left[\xmat{S}_{np} \xmat{S}_{np0}^{-1} vec(\tilde{\xmat{A}}_0 - \tilde{\xmat{A}}) +  \xmat{S}_{np} \xmat{S}_{np0}^{-1} (\xmat{\Pi}'_0 - \xmat{\Pi}')\ddot{\xvec{Y}}_{t-1}\right]
		\end{eqnarray*}
		is uniformly equicontinuous, because $\tilde{\xmat{A}}$ and $\xmat{\Pi}$ are bounded, $\xmat{S}_{np}(\xmat{\Psi})$ is uniformly bounded in $\xmat{\Psi}$ and $\ddot{\xvec{Y}}_{t}' \ddot{\xvec{Y}}_{t}$ is $O(1)$ in $\xvec{\vartheta}$ according to Lemma \ref{lemma:conv}. Then, since $\xmat{S}_{np}^{-1}$ is $O(1)$ in $\xmat{\Psi}$ and $\xmat{S}_{np}(\xmat{\Psi})\xmat{S}_{np0}^{-1} = \xmat{I} - ((\xmat{\Psi}' - \xmat{\Psi}_0') \otimes \xmat{W}) \xmat{S}_{np0}^{-1}$, also
		\begin{equation*}
			\frac{1}{2np} tr\left( \xmat{S}_{np} \xmat{S}_{np0}^{-1} \xmat{S}_{np0}^{'-1} \xmat{S}_{np}' \right) =
			\frac{1}{2np} tr\left( (\xmat{I} - ((\xmat{\Psi}' - \xmat{\Psi}_0') \otimes \xmat{W}) \xmat{S}_{np0}^{-1}) (\xmat{I} - ((\xmat{\Psi}' - \xmat{\Psi}_0') \otimes \xmat{W}) \xmat{S}_{np0}^{-1})' \right) 
		\end{equation*}
		is a Lipschitz function in $\xmat{\Psi}$. Thus, this term is uniformly equicontinuous. 
		 
		 Because all terms are uniformly equicontinuous, also $\frac{1}{npT} Q(\tilde{\xmat{A}}, \xmat{\Psi}, \xmat{\Pi}|\xvec{Y}_0)$ is uniformly equicontinuous. 
	\end{itemize}
	Because $\xvec{\vartheta}_0$ is uniquely identified and the log-likelihood uniformly converges to the uniformly equicontinuous $\frac{1}{npT} Q(\tilde{\xmat{A}}, \xmat{\Psi}, \xmat{\Pi}|\xvec{Y}_0)$ in $\xvec{\vartheta} = (vec(\tilde{\xmat{A}})', vec(\xmat{\Psi})', vec(\xmat{\Pi})')'$, the consistency follows. This completes the proof.
\end{proof}

\end{appendix}


\begin{thebibliography}{}

\bibitem[Bauwens and Sucarrat, 2010]{bauwens2010general}
Bauwens, L. and Sucarrat, G. (2010).
\newblock General-to-specific modelling of exchange rate volatility: A forecast
  evaluation.
\newblock {\em International Journal of Forecasting}, 26(4):885--907.

\bibitem[Bollerslev, 1986]{Bollerslev86}
Bollerslev, T. (1986).
\newblock Generalized autoregressive conditional heteroskedasticity.
\newblock {\em Journal of Econometrics}, 31(3):307--327.

\bibitem[Borovkova and Lopuhaa, 2012]{Borovkova12}
Borovkova, S. and Lopuhaa, R. (2012).
\newblock Spatial {GARCH}: A spatial approach to multivariate volatility
  modeling.
\newblock {\em Available at SSRN 2176781}.

\bibitem[Brockwell and Davis, 2006]{Brockwell06}
Brockwell, P.~J. and Davis, R.~A. (2006).
\newblock {\em Introduction to time series and forecasting}.
\newblock Springer Science \& Business Media.

\bibitem[Cressie and Wikle, 2011]{Cressie11}
Cressie, N. and Wikle, C.~K. (2011).
\newblock {\em {Statistics for Spatio-Temporal Data}}.
\newblock Wiley.

\bibitem[Engle, 1982]{Engle82}
Engle, R.~F. (1982).
\newblock Autoregressive conditional heteroscedasticity with estimates of the
  variance of {United Kingdom} inflation.
\newblock {\em Econometrica: Journal of the Econometric Society},
  50(4):987--1007.

\bibitem[Engle and Kroner, 1995]{engle1995multivariate}
Engle, R.~F. and Kroner, K.~F. (1995).
\newblock {Multivariate simultaneous generalized ARCH}.
\newblock {\em Econometric Theory}, 11(1):122--150.

\bibitem[Francq and Zakoian, 2011]{Francq11}
Francq, C. and Zakoian, J.-M. (2011).
\newblock {\em {GARCH} models: {Structure, Statistical Inference and Financial
  Applications}}.
\newblock John Wiley \& Sons.

\bibitem[Gentle, 2017]{gentle2017matrix}
Gentle, J.~E. (2017).
\newblock Matrix transformations and factorizations.
\newblock In {\em Matrix Algebra}, pages 227--263. Springer.

\bibitem[H{\o}lleland and Karlsen, 2020]{holleland2020stationary}
H{\o}lleland, S. and Karlsen, H.~A. (2020).
\newblock A stationary spatio-temporal {GARCH} model.
\newblock {\em Journal of Time Series Analysis}, 41(2):177--209.

\bibitem[Kelejian and Prucha, 1998]{Kelejian98}
Kelejian, H.~H. and Prucha, I.~R. (1998).
\newblock {A Generalized Spatial Two-Stage Least Squares Procedure for
  Estimating a Spatial Autoregressive Model with Autorgegressive Disturbance}.
\newblock {\em {Journal of Real Estate Finance and Economics}}, 17(1):99--121.

\bibitem[Lee, 2004]{Lee04}
Lee, L.-F. (2004).
\newblock {Asymptotic Distributions of Quasi-Maximum Likelihood Estimators for
  Spatial Autoregressive Models}.
\newblock {\em Econometrica}, 72(6):1899--1925.

\bibitem[Otto, 2019]{Otto19_RJournal}
Otto, P. (2019).
\newblock {spGARCH}: An {R}-package for spatial and spatiotemporal {ARCH}
  models.
\newblock {\em The R Journal}, 11(2):401--420.

\bibitem[Otto and Schmid, 2019]{OttoSchmid19_arxiv_unified}
Otto, P. and Schmid, W. (2019).
\newblock Spatial and spatiotemporal {GARCH} models -- a unified approach.
\newblock {\em arXiv:1908.08320}.

\bibitem[Otto et~al., 2018]{Otto18_spARCH}
Otto, P., Schmid, W., and Garthoff, R. (2018).
\newblock {Generalised Spatial and Spatiotemporal Autoregressive Conditional
  Heteroscedasticity}.
\newblock {\em Spatial Statistics}, 26:125--145.

\bibitem[Otto et~al., 2019]{Otto19_statpapers}
Otto, P., Schmid, W., and Garthoff, R. (2019).
\newblock Stochastic properties of spatial and spatiotemporal {ARCH} models.
\newblock {\em Statistical Papers}.

\bibitem[Robinson, 2009]{Robinson:2009}
Robinson, P.~M. (2009).
\newblock Large-sample inference on spatial dependence.
\newblock {\em Econometrics Journal}, 12.

\bibitem[Rothenberg, 1971]{Rothenberg71}
Rothenberg, T.~J. (1971).
\newblock Identification in parametric models.
\newblock {\em Econometrica: Journal of the Econometric Society}, pages
  577--591.

\bibitem[Sato and Matsuda, 2017]{Sato17}
Sato, T. and Matsuda, Y. (2017).
\newblock Spatial autoregressive conditional heteroskedasticity models.
\newblock {\em Journal of the Japan Statistical Society}, 47(2):221--236.

\bibitem[Sato and Matsuda, 2021]{sato2021spatial}
Sato, T. and Matsuda, Y. (2021).
\newblock Spatial extension of generalized autoregressive conditional
  heteroskedasticity models.
\newblock {\em Spatial Economic Analysis}, 16(2):148--160.

\bibitem[Sucarrat and Escribano, 2018]{sucarrat2018estimation}
Sucarrat, G. and Escribano, A. (2018).
\newblock Estimation of {log-GARCH} models in the presence of zero returns.
\newblock {\em The European Journal of Finance}, 24(10):809--827.

\bibitem[Sucarrat et~al., 2016]{sucarrat2016estimation}
Sucarrat, G., Gr{\o}nneberg, S., and Escribano, A. (2016).
\newblock Estimation and inference in univariate and multivariate {log-GARCH-X}
  models when the conditional density is unknown.
\newblock {\em Computational statistics \& data analysis}, 100:582--594.

\bibitem[Taspinar et~al., 2021]{tacspinar2021bayesian}
Taspinar, S., Do{\u{g}}an, O., Chae, J., and Bera, A.~K. (2021).
\newblock Bayesian inference in spatial stochastic volatility models: An
  application to house price returns in {Chicago}.
\newblock {\em Oxford Bulletin of Economics and Statistics}.

\bibitem[Tobler, 1970]{Tobler70}
Tobler, W.~R. (1970).
\newblock A computer movie simulating urban growth in the {Detroit} region.
\newblock {\em Economic geography}, 46(sup1):234--240.

\bibitem[Yang and Lee, 2017]{yang2017identification}
Yang, K. and Lee, L.-f. (2017).
\newblock Identification and {QML} estimation of multivariate and simultaneous
  equations spatial autoregressive models.
\newblock {\em Journal of Econometrics}, 196(1):196--214.

\bibitem[Yu et~al., 2008]{Yu08}
Yu, J., de~Jong, R., and Lee, L.-f. (2008).
\newblock Quasi-maximum likelihood estimators for spatial dynamic panel data
  with fixed effects when both n and {T} are large.
\newblock {\em Journal of Econometrics}, 146(1):118--134.

\end{thebibliography}
\end{document}